\newtheorem{theorem}{Theorem}[section]
\newtheorem{lemma}[theorem]{Lemma}
\newtheorem{remark}{Remark}
\definecolor{newblue}{rgb}{0,0.4,0.7}
\definecolor{DarkBlue}{rgb}{0.1,0.22,0.45}%black:{0.009,0.009,0.009}
\definecolor{DarkGreen}{rgb}{0.2,0.6,0.2}%{0,0.5,0}
\definecolor{DarkRed}{rgb}{0.8,0.3,0.3}%{0.66,0.01,0.1}
\setlist{nolistsep}
\newtheorem{deff}{Definition}
\newtheorem{assumption}{Assumptions}
\begin{document}

\title{\LARGE Cubature Kalman Filter as a Robust State Estimator Against Model Uncertainty and Cyber Attacks in Power Systems }

\author{Tohid Kargar Tasooji, Sakineh Khodadadi

\thanks{ Tohid Kargar Tasooji is with the Department of Aerospace Engineering,
Toronto Metropolitan University, Toronto, ON M5B 2K3, Canada (e-mail:
tohid.kargartasooji@torontomu.ca). Sakineh Khodadadi is with the Department of Electrical and Computer
Engineering, University of Alberta, Edmonton,AB,  AB T6G 1H9, Canada (email: sakineh@ualberta.ca).}
}

\maketitle
\begin{abstract}
 It is known that the conventional estimators such as extended Kalman filter (EKF) and unscented Kalman filter (UKF) may provide favorable performance; However, they may not guarantee the robustness against model uncertainty and cyber attacks.
In this paper, we compare the performance of cubature Kalman filter (CKF) to the conventional nonlinear estimator, the EKF, under the affect of model uncertainty and cyber-attack. We show that the CKF has better estimation accuracy than the EKF under some conditions. In order to verify our claim, we have tested the performance various nonlinear estimators on the single machine infinite-bus (SMIB) system under different scenarios. We show that (1) the CKF provides better estimation results than the EKF; (2) the CKF is able to detect different types of cyber attacks reliably which is superior to the EKF.
\end{abstract}
\IEEEoverridecommandlockouts
\begin{keywords}
extended Kalman filter, unscented Kalman filter, cubature Kalman filter, model uncertainty, cyber attack 
\end{keywords}
% no keywords

% For peer review papers, you can put extra information on the cover
% page as needed:
% \begin{center} \bfseries EDICS Category: 3-BBND \end{center}
%
% for peerreview papers, inserts a page break and creates the second title.
% Will be ignored for other modes.
\IEEEpeerreviewmaketitle

\section{Introduction}
% no \PARstart
\lettrine{C}{\large yber-physical} systems (CPSs) are systems that integrate sensing, communication, control, computation and physical processes \cite{1, 50, 51, 52, 53, 54, 55, 56 , 57, 58, 59, 60, 61, 62, 63}. CPSs are now ubiquitous in multiple areas including chemical processes, smart grids, mine monitoring, intelligent transportation, precision agriculture, aerospace, etc. \cite{2,3,4}.  One problem affecting CPSs, inherent in the use of a communication network to establish communication between system components, is their susceptibility to cyber attacks, \cite{5,6,7}.
Today’s threat of cyber attacks to CPSs is increasing at an alarming pace, and there is a consequent urgency in protecting CPSs against cyber attacks.
%Cyber-physical  attacks can be broadly divided into deception attacks and
%disruption, or denial of service, attacks. \\

Cyber Attacks to CPSs have received much attention from the research community. Teixeira et al. \cite{8} analyzed different attack models, including Denial of Service (DoS), replay, False data injection (FDI) and zero dynamic attacks.
Cardenas et al. \cite{9} investigated cyber attacks
to measurement and actuator data integrity and availability. The authors present two types of CPS attacks: DoS
and deception attacks. DoS attacks obstruct the communication
channels preventing the exchange of information between system components. Deception attacks affect the integrity of data by manipulating its content. Mo et al. \cite{10} consider a form of deception attack known as replay-attacks. The authors introduce a replay attack model on CPS and analyze the performance of the control system under attack.\\

In this article, our primary interest is robust state estimation of nonlinear systems under cyber attacks. Most studies in nonlinear state estimation focus on extended Kalman filter (EKF) \cite{11}, \cite{12}, unscented Kalman filter (UKF) \cite{13,14,15,16,17}, square-root unscented Kalman filter (SR-UKF) \cite{18,19,20,21},  extended particle filter \cite{22,23} and ensemble Kalman filter \cite{24}.  
%Therefore, there is motivation for developing a robust state estimator that can work well under model uncertainty, cyber-attacks, faults, and outliers. 

Guo et al. \cite{25} propose a new linear attack strategy on remote state estimation and present the corresponding feasibility constraint to ensure that the attack can successfully bypass a $\chi^{2}$-detector. Shi et al. \cite{26} propose a stochastic modeling framework for CPSs considering the effect of adversary attacks. Based on this framework, the authors formulate and solve a joint state and attack estimation problem. Xie et al. \cite{27} model the economic impact of malicious data attacks on real-time market operations. They examined the economic impact of FDI attacks against state estimation. Yang et al. \cite{28} present an appropriate phasor measurement unit (PMU) placement for reliable state estimation against
FDI attacks. Gandhi et al. \cite{29} propose a new robust Kalman filter based on the generalized
maximum-likelihood-type estimate against outliers. Taha et al. \cite{30} present a risk mitigation strategy, based on Sliding-Mode Observer and an attack detection
filter, against threat levels from the grid's unknown inputs and cyber-attacks. Karimipour et al. \cite{31} presented a robust dynamic state estimation algorithm based on EKF against FDI attacks in power systems. Zhao et al. \cite{32} introduce a general constrained robust dynamic state estimation (DSE) framework based on UKF to address various equality and inequality constraints while is robust against measurement noise and bad data. Manandhar et al. \cite{33} design a framework for the linear system using
KF  together with the $\chi^{2}$-detector and Euclidean detector which is able to detect different types of faults and attacks. Qi et al. \cite{34} compare different types of Kalman filters and dynamic observers under model uncertainty and malicious cyber attacks on nonlinear systems. \\

One problem associated with state estimators and their application is modelling error. When the mathematical model deviates from the actual plant being monitored, model predictions deviate from the actual state and therefore compromise their use in any application. This problem is more significant in the case of nonlinear systems, given that all of the filters known in the literature rely on system approximations of some form. The possible presence of cyber attacks therefore make this problem more critical.  Based on these observations, in this paper we focus on robust state estimation under cyber attacks and compare the performance of two well known filters, namely the EKF and the CKF.

The focus of this article is to draw a comparative study of the most popular nonlinear filters and their tolerance to both modelling error and cyber attacks. More specifically, we compare the EKF to the CKF.
First, we study the effect of modelling error. Based on our analytical analysis, we argue that the CKF provides better estimates than the EKF under certain conditions.  
 We then consider the effects of cyber attacks and once again, we argue that the CKF outperforms the EKF when used to detect cyber attacks. 
We show that the CKF can reliably detect cyber attacks, including random attacks, DoS, replay  and FDI attacks, and outperforms the EKF in all type of attacks.
We illustrate our claims via extensive simulations of the two filters under various conditions, including parameter uncertainties, uncertainty in measurement and noises as well as cyber attacks, using a fourth order synchronous machine.\\

The rest of the paper is organized as follows: Section 
II presents dynamic state estimation for nonlinear systems based on EKF and CKF. Section III presents the effect of modeling error on each estimator. Section IV presents the modelling of different types of cyber-attacks and the affect of random attacks on each estimator. Section V describes two cyber-attack detectors. Section VI presents the performance analysis of each estimator under different scenarios. Finally, Section VII presents the conclusions. 

\section{Nonlinear State Estimators}
Throughout the rest of the paper we consider a nonlinear discrete-time system model of the form:
\begin{equation} 
\left\{
                \begin{array} {l} \label{eq:equation3}
                x_{k} = f(x_{k-1},u_{k-1})+w_{k-1}\\
                 {y_{k}} = h(x_{k},u_{k})+v_{k}
                \end{array}
\right.
\end{equation}
where $x_{k}$ is the system state vector, $u_{k}$ is the input vector, $w_{k}$ is the process (random state) noise, $y_{k}$ is the noisy
observation or measured output vector, and  $v_{k}$ is the
measurement noise. The noise sequences $w_{k}$ and $v_{k}$ are assumed to be white, Gaussian, and independent of each other:
\begin{align}
&\left\{
                \begin{array} {l} \label{eq:equation4}
                E[w_{k}] = 0\\
                E[w_{k}w_{k}^T] = Q_{k}\\
                \end{array}
\right.
\\
&\left\{
                \begin{array} {l} \label{eq:equation5}
                E[v_{k}] = 0\\
                E[v_{k}v_{k}^T] = R_{k}\\
                \end{array}
\right.
\\
&\left\{
                \begin{array} {l} \label{eq:equation6}
                E[w_{k}w{j}^T] = 0 \ \ \ \ \ (\texttt{for}\  j\ne k)\\
                E[v_{k}v_{k}^T] = 0  \ \ \ \ \ (\texttt{for}\  j\ne k)\\
                \end{array}
\right.
\\
\label{eq:equation7}
& ~~~~E[w_{k}v_{j}^T] = 0
\end{align}
Equations \eqref{eq:equation4} and \eqref{eq:equation5} show that $w_{k}$ and $v_{k}$ have a zero
mean, with covariance matrices $Q_{k}$ and $R_{k}$, respectively. Equation \eqref{eq:equation6} indicates that the values of $v_{k}$ (respectively, $w_{k}$), at different time instants are uncorrelated, while equation \eqref{eq:equation7}
implies that the process (state) and measurement (observation)
noises are not cross-correlated.  

We will consider the following filters:

\subsection{\textbf{EXTENDED KALMAN FILTER}}
EKF is a well established nonlinear filter that has been used in multiple application over the last four decades. Some of its limitations include the following \cite{37,38,39}:
\begin{enumerate}
\item The EKF algorithm is based on a first-order Taylor expansion and therefore works well with relatively mild nonlinear models but performance can degrade when dealing with highly nonlinear systems.
\item The computation of the Jacobian matrices used in the linearization is often difficult and error-prone.
\item It suffers from the so-called \emph{curse of dimensionality} or
divergence or both; {\it i.e.} state estimates deviate from the true state for high order nonlinear models. 
\end{enumerate}

The discrete-time EKF algorithm for the state estimation is summarized in the appendix \cite{40}.

\subsection{\textbf{CUBATURE KALMAN FILTER}}
 CKF algorithm was proposed by Arasaratnam and Haykin \cite{38}, to resolve the limitations encountered in the EKF. The CKF algorithm consists of a numerical approximation of the integral consistent of the product of a nonlinear function and a Gaussian function,  using a spherical-radial cubature rule. Unlike the EKF algorithm, the CKF captures the higher-order terms of the nonlinear system resulting in better estimates. A limitation encountered in the CKF algorithm is however, the possible loss of positive definiteness when computing the predictive covariance in the filter algorithm, which can halt the CKFalgorithm. 
The discrete-time CKF algorithm for state estimation is summarized in the appendix \cite{38}.

\subsection{\textbf{SQUARE-ROOT CUBATURE KALMAN FILTER}}
SCKF is square-root extension of CKF which propagates square-root factors of the predictive
and posterior error covariances. Some benefits are reported for SCKF:
\begin{enumerate}
\item Symmetry and positive (semi) definiteness
of the covariance are preserved.
\item Doubled-order precision.
\end{enumerate}
The discrete-time SCKF algorithm for state estimation is summarized in the appendix \cite{38}.

\section{Model Uncertainty} 
One of the significant challenges in dynamic state estimation (DSE) is the inaccuracy of the model and parameters which can consequently deteriorate state estimation \cite{41}. The nonlinear stochastic system \eqref{eq:equation3} under model uncertainty can be rewritten as follows:
\begin{equation} 
\left\{
                \begin{array} {l} \label{eq:equation8}

                x_{k} = f(x_{k-1},u_{k-1})+\Delta A_{k-1}x_{k-1}+w_{k-1}\\
                 {y_{k}} = h(x_{k},u_{k})+\Delta C_{k}x_{k}+v_{k}
                \end{array}
\right.
\end{equation}
Where $\Delta A_{k-1}x_{k-1}$ and $\Delta C_{k}x_{k}$ represent parameter uncertainties, 
$\Delta A_{k-1}$ and $\Delta C_{k}$ are unknown matrices satisfying $\left\|\Delta A_{k-1}\right\|\le \alpha$ and $\left\|\Delta C_{k}\right\|\le \beta$, respectively.

\begin{deff}
The estimate $\hat{x}_{k|k}$ of the system state $x_{k}$ is said to be stable in the minimum mean square error (MMSE) sense, if $\mathbb{E}\left\{\left\|x_{k}-\hat{x}_{k|k}\right\|^2\right\} $ is bounded, {\it i.e.}, $\mathrm{sup}_{k} \mathbb{E}\left\{\left\|x_{k}-\hat{x}_{k|k}\right\|^2\right\}  \le \infty $ 
\end{deff}

\begin{assumption}\label{Ass:1}
There exist positive real numbers $\widetilde{a}$, $\widetilde{c}$, $\underline{p}$, $\bar{p}$, $\widetilde{w}$, $\widetilde{v}$ $>0$ such that
the following bounds on various matrices are satisfied
for every $k\geq0$: (i) $\left\|A_{k-1}\right\|\le \widetilde{a}$, (ii) $\left\|C_{k}\right\|\le \widetilde{c}$, (iii) $\underline{p}I \le P_{k|k}\le \bar{p}I$, (iv) $\left\|\mathbb E \left\{w_{k-1}^Tw_{k-1} \right\}\right\|\le \widetilde{w}$, (v) $\left\|\mathbb E \left\{v_{k}^Tv_{k} \right\} \right\|\le \widetilde{v}$, (viii) $\left\|\Delta A_{k-1}\right\|\le \frac{\delta_{a}}{\left\|x_{k-1}\right\|}$, (ix) $\left\|\Delta C_{k}\right\|\le \frac{\delta_{c}}{\left\|x_{k}\right\|}$
\end{assumption}

\begin{assumption} \label{Ass:3}
The nonlinearity of function $f(x_{k-1},u_{k-1})$ satisfying by \eqref{eq:equation8} is mild such that, in Taylor series expansion around $a_{k-1}$
\begin{equation} 
\begin{array}{l}
 f(x_{k-1},u_{k-1})=f(a_{k-1},u_{k-1})+A_{k-1}(x_{k-1}-a_{k-1})\\+o(x_{k-1},a_{k-1},u_{k-1})
             \end{array}
\end{equation}
\begin{itemize}
\item  It has derivatives of all orders.
\item $f(x_{k-1},u_{k-1})$ is  real analytic on an open set $D$.
\item The higher order terms is bounded as a follows:
\begin{equation}
 \begin{array}{l} 
\left\|o(x_{k-1},a_{k-1},u_{k-1})\right\| \\ = \left\|\sum\limits_{n=2}^{\infty} \frac{f^n(a_{k-1})}{n!} (x_{k-1}-a_{k-1})^n  \right\| \le  \mu  
  \end{array} 
  \end{equation}
Where for a small positive value $\rho$, we have:
\begin{equation}
 \begin{array}{l} 
 \forall x_{k-1}\in D \ \  \  \ \left\|x_{k-1}-a_{k-1}\right\| \le \rho
  \end{array} 
  \end{equation}
  \end{itemize}
\end{assumption}

\begin{remark}
Note that $A_{k-1}$ is the state transition matrix which is the Jacobian matrix of the nonlinear function $f(.)$, $C_{k}$ is measurement matrix which is the Jacobian matrix of the nonlinear function $h(.)$ and $P_{k|k}$ is the covariance of the estimation error.
\end{remark}
\begin{assumption}\label{Ass:2}
There exist positive real numbers $\varepsilon_{\theta}$, $\varepsilon_{\phi}$ such that the nonlinear functions $\theta_{f}$ and $\phi_{h}$ are bounded via \label{item3}
 \begin{equation}
 \begin{array}{l} \label{eq:equation9}
\left\|\theta_{f}(e_{k-1|k-1})\right\|\le \varepsilon_{\theta}\left\|e_{k-1|k-1}\right\|^2
  \end{array}
  \end{equation}
 \begin{equation}
 \begin{array}{l} \label{eq:equation10}
\left\|\phi_{h}(e_{k-1|k-1})\right\|\le \varepsilon_{\phi}\left\|e_{k-1|k-1}\right\|^2
  \end{array}
  \end{equation}
\end{assumption}
\begin{lemma} \label{1}
 Consider the stochastic nonlinear system with parameter uncertainties given by \eqref{eq:equation8} and  let Assumption \ref{Ass:1} and Assumption \ref{Ass:2} be satisfied. Then the upper bound for the estimation error in the CKF, $e_{k|k}^c$, satisfies the following bound: 
 \begin{equation}
 \begin{array}{l} \label{eq:equation11}
\mathbb{E}\left\{\left\|e_{k|k}^{c}\right\|^2\right\} \le \frac{\bar{p}}{\underline{p}}(1-\frac{\gamma}{2})^k\mathbb{E}\left\{\left\|e_{0|0}\right\|^2\right\} + \frac{\xi^{c}}{\underline{p}\gamma}
  \end{array}
  \end{equation}
provided that the initial estimation error satisfies $\left\|e_{0|0}\right\| \le \epsilon $, and $\xi^{c}$ is sufficiently small. Note that $\xi^{c}$ is a function of noise covariance and the parameter uncertainty and $0 <\gamma \le 1$. \end{lemma}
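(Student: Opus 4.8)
The plan is to build a stochastic Lyapunov function from the error covariance, establish a one-step contraction for its conditional expectation, and then unroll that recursion and translate back to the mean-square error through the two-sided covariance bound in Assumption~\ref{Ass:1}(iii). The first step is to derive the closed-loop recursion for the CKF filtering error $e_{k|k}^{c}=x_{k}-\hat{x}_{k|k}^{c}$. Substituting the uncertain dynamics \eqref{eq:equation8} into the CKF prediction and measurement-update equations, and expanding $f$ and $h$ so that the cubature approximation residuals are collected into the functions $\theta_{f}$ and $\phi_{h}$ of Assumption~\ref{Ass:2}, the error obeys a relation of the form
\[
\begin{aligned}
e_{k|k}^{c}={}&(I-K_{k}C_{k})\bigl(A_{k-1}e_{k-1|k-1}^{c}+\theta_{f}(e_{k-1|k-1})+\Delta A_{k-1}x_{k-1}+w_{k-1}\bigr)\\
&-K_{k}\bigl(\phi_{h}(e_{k|k-1})+\Delta C_{k}x_{k}+v_{k}\bigr),
\end{aligned}
\]
where $K_{k}$ is the cubature gain. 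The quadratic residual bounds \eqref{eq:equation9}--\eqref{eq:equation10} and the uncertainty bounds (viii)--(ix) of Assumption~\ref{Ass:1} keep every nonlinear and uncertain contribution dominated by $\left\|e_{k-1|k-1}\right\|$ and the fixed constants $\delta_{a},\delta_{c}$, so that $\Delta A_{k-1}x_{k-1}$ and $\Delta C_{k}x_{k}$ enter as bounded additive perturbations rather than as terms that grow with the state.

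Next I would take $V_{k}=(e_{k|k}^{c})^{T}P_{k|k}^{-1}e_{k|k}^{c}$ as the Lyapunov candidate and bound the one-step conditional expectation $\mathbb{E}\{V_{k}\mid e_{k-1|k-1}^{c}\}$. Using $\underline{p}I\le P_{k|k}\le\bar{p}I$ to control $P_{k|k}^{-1}$, the operator-norm bounds on $A_{k-1}$ and $C_{k}$, the noise-covariance bounds (iv)--(v), and the smallness of the residuals, the dominant quadratic term $[(I-K_{k}C_{k})A_{k-1}]^{T}P_{k|k}^{-1}[(I-K_{k}C_{k})A_{k-1}]$ is shown to contract against $P_{k-1|k-1}^{-1}$, while the cross terms (handled by a Young-type inequality) and the noise/uncertainty terms aggregate into a single additive constant. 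Concretely, the target is an inequality of the form
\[
\mathbb{E}\{V_{k}\mid e_{k-1|k-1}^{c}\}\le\Bigl(1-\frac{\gamma}{2}\Bigr)V_{k-1}+c^{c},
\]
with $0<\gamma\le1$, where $c^{c}$ collects the contributions of $Q_{k}$, $R_{k}$, $\delta_{a}$ and $\delta_{c}$.

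Taking total expectations, iterating this scalar recursion, and summing the geometric series yields $\mathbb{E}\{V_{k}\}\le(1-\gamma/2)^{k}\mathbb{E}\{V_{0}\}+2c^{c}/\gamma$; sandwiching $\left\|e_{k|k}^{c}\right\|^{2}$ between $\underline{p}V_{k}$ and $\bar{p}V_{k}$ and using $\mathbb{E}\{V_{0}\}\le\underline{p}^{-1}\mathbb{E}\{\left\|e_{0|0}\right\|^{2}\}$ then produces \eqref{eq:equation11} with $\xi^{c}$ the suitably scaled lump of $c^{c}$. The step I expect to be the main obstacle is establishing that the contraction factor $1-\gamma/2$ is genuinely strictly below one. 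This is precisely where the hypotheses $\left\|e_{0|0}\right\|\le\epsilon$ and the requirement that $\xi^{c}$ be sufficiently small are needed: they confine the trajectory to the region $\left\|x_{k-1}-a_{k-1}\right\|\le\rho$ of Assumption~\ref{Ass:3} where the quadratic residual bounds of Assumption~\ref{Ass:2} hold, and they prevent the quadratic-in-error and uncertainty contributions from overpowering the nominal contraction furnished by $(I-K_{k}C_{k})A_{k-1}$. To make this rigorous I would interleave an induction certifying that $\mathbb{E}\{\left\|e_{k|k}^{c}\right\|^{2}\}$ remains inside this region for every $k$, so that the one-step bound can be reapplied and the recursion closed.
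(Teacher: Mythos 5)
Your proposal follows essentially the same route as the paper's own proof: the identical error decomposition $e_{k|k}^{c}=(I-K_{k}C_{k})A_{k-1}e_{k-1|k-1}^{c}+L_{k}+N_{k}$ with the nonlinear residuals $\theta_{f},\phi_{h}$ and the uncertainty terms $\Delta A_{k-1}x_{k-1}$, $\Delta C_{k}x_{k}$ collected as bounded perturbations, the same Lyapunov candidate $V=e^{T}P_{k|k}^{-1}e$ sandwiched via Assumption~\ref{Ass:1}(iii), a one-step conditional contraction with factor $(1-\frac{\gamma}{2})$ obtained by confining the error to a small ball so the cubic term is absorbed (exactly the paper's restriction \eqref{eq:equation29}), and unrolling the recursion to get \eqref{eq:equation11}. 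The only differences are presentational: the paper gets the base $(1-\gamma)$ contraction by citing Lemma~1 of \cite{42}, kills the noise cross term exactly by zero-mean independence (its equation \eqref{eq:equation26}) where you invoke a Young-type inequality, and concludes via Lemma~2.1 of \cite{48} rather than summing the geometric series by hand, with your proposed induction keeping the trajectory in the contraction region playing the role of the boundedness hypothesis built into that cited lemma.
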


\begin{proof}
Define a random variable $X_{i,k-1|k-1} \sim N(\hat{x}_{k-1|k-1},P_{k-1|k-1})$ as Cubature points with Gaussian distribution and also $\Delta x_{k-1|k-1} \sim N(0,P_{k-1|k-1})$ with the Gaussian distribution. With these assumptions we can compute cubature points as follows:
\begin{equation}
 \begin{array}{l}  \label{eq:equation12}
X_{i,k-1|k-1}=\hat{x}^c_{k-1|k-1}+\Delta x_{k-1|k-1}^i.
  \end{array}
  \end{equation}
Using equations \eqref{eq:equation8}, \eqref{eq:equation12}, \eqref{eq:equation68} and \eqref{eq:equation69} and considering the parameter uncertainty in the process model, the prediction error can be computed as a follows:
\begin{equation}
 \begin{array}{l} \label{eq:equation13}
e_{k|k-1}^{c}=x_{k}-\hat{x}^c_{k|k-1}=f(x_{k-1},u_{k-1}) + \Delta A_{k-1}x_{k-1}\\+w_{k-1}-\left\{ 
\frac{1}{m}\sum\limits_{m=1}^{m} f(\hat{x}^c_{k-1|k-1}+\Delta x_{k-1|k-1}^i,u_{k-1})\right\}
  \end{array}
  \end{equation}
Using Taylor series expansion around $\hat{x}_{k-1|k-1}^c$, we can write:
\begin{equation}
 \begin{array}{l} \label{eq:equation14}
e_{k|k-1}^c=A_{k-1}e_{k-1|k-1}^c+\theta_{f}(|e_{k-1|k-1}^c|)+w_{k-1}\\ -A_{k-1}\left\{ 
\frac{1}{m}\sum\limits_{m=1}^{m} \Delta x_{k-1|k-1}^i\right\}-\left\{ 
\frac{1}{m}\sum\limits_{m=1}^{m} \theta_{f}(|\Delta x_{k-1|k-1}^i|)\right\}\\ + \Delta A_{k-1}x_{k-1}= A_{k-1}e_{k-1|k-1}^c+ \theta_{f}(|e_{k-1|k-1}^c|)\\ +w_{k-1}+ \Delta A_{k-1}x_{k-1}.
  \end{array}
  \end{equation}
  
The estimation error under parameter uncertainties is given by:
  \begin{equation}
 \begin{array}{l} \label{eq:equation15}
e_{k|k}^c=x_{k}-\hat{x}^c_{k|k}=f(x_{k-1},u_{k-1})+\Delta A_{k-1}x_{k-1}\\ +w_{k-1}\ -\hat{x}^c_{k|k-1} -K_{k}(y_{k}-h(\hat{x}^c_{k|k-1}))= A_{k-1}e_{k-1|k-1} \\ +\theta_{f}(|e_{k-1|k-1}^c|)+ \Delta A_{k-1}x_{k-1} +w_{k-1}\\ -K_{k}(h(x_{k},u_{k}) -\left\{ 
\frac{1}{m}\sum\limits_{m=1}^{m} h(\hat{x}^c_{k-1|k-1}+\Delta x_{k-1|k-1}^i,u_{k-1})\right\}\\ +v_{k}+ \Delta C_{k}x_{k}),
  \end{array}
  \end{equation}
where $h(\hat{x}^c_{k|k-1})$ can be expanded in Taylor series around $\hat{x}_{k|k-1}$ as follows:
  \begin{equation}
 \begin{array}{l} \label{eq:equation16}
e_{k|k}^c=A_{k-1}e_{k-1|k-1}^c +\theta_{f}(|e_{k-1|k-1}^c|)+\Delta A_{k-1}x_{k-1}\\ +w_{k-1}-K_{k}(C_{k}e_{k|k-1}^c+\phi_{h}(e_{k-1|k-1}^c)+v_{k}+ \Delta C_{k}x_{k}\\ -C_{k}\left\{
\frac{1}{m}\sum\limits_{m=1}^{m} \Delta x_{k-1|k-1}^i\right\}-\left\{ 
\frac{1}{m}\sum\limits_{m=1}^{m} \phi_{h}(|\Delta x_{k-1|k-1}^i|)\right\}).
  \end{array}
  \end{equation}
  
Thus
  \begin{equation}
 \begin{array}{l} \label{eq:equation17}
e_{k|k}^c=\widetilde{A}e_{k-1|k-1}^c+L_{k}+N_{k},
  \end{array}
  \end{equation}
where the estimation error consist of three terms:
  \begin{equation}
 \begin{array}{l} \label{eq:equation18}
 \widetilde{A}=[I-K_{k}C_{k}]A_{k-1}\\
 L_{k}=[I-K_{k}C_{k}]\theta_{f}(e_{k-1|k-1}^c)-K_{k}\phi_{h}(e_{k-1|k-1}^c)\\ +[I-K_{k}C_{k}]\Delta A_{k-1}x_{k-1}- K_{k}\Delta C_{k}x_{k}\\
 N_{k}=[I-K_{k}C_{k}]w_{k-1}-K_{k}v_{k}.
  \end{array}
  \end{equation}
Define the Lyapunov function $V:\mathbb{R}^n\rightarrow \mathbb{R}$:
  \begin{equation}
 \begin{array}{l} \label{eq:equation19}
V(e_{k-1|k-1}^c)=e_{k-1|k-1}^T P_{k-1|k-1}^{-1} e_{k-1|k-1}.
  \end{array}
  \end{equation}
From Assumption \ref{Ass:1}.(iii) we have:
  \begin{equation}
 \begin{array}{l} \label{eq:equation20}
\frac{1}{\bar{p}}\left\|e_{k-1|k-1}^c\right\|^2 \le V(e_{k-1|k-1}^c) \le \frac{1}{\underline{p}}\left\|e_{k-1|k-1}^c\right\|^2.
  \end{array}
  \end{equation}
Replacing \eqref{eq:equation17} and \eqref{eq:equation18} into \eqref{eq:equation19} we can write:
  \begin{equation}
 \begin{array}{l}  \label{eq:equation21}
V(e_{k|k}^c)=[\widetilde{A}e_{k-1|k-1}^c+L_{k}+N_{k}]^T P_{k|k}^{-1} [\widetilde{A}e_{k-1|k-1}^c\\+L_{k}+N_{k}].
  \end{array}
  \end{equation}
Applying Lemma 1 of \cite{42} we can write the following inequality:
  \begin{equation}
 \begin{array}{l} \label{eq:equation22}
V(e_{k|k}^c) \le (1-\gamma)V(e_{k-1|k-1}^c) + L_{k}^T P_{k|k}^{-1}[2\widetilde{A}e_{k-1|k-1}^c+L_{k}]^T\\+2N_{k}^T P_{k|k}^{-1}[\widetilde{A}e_{k-1|k-1}^c+L_{k}]+ N_{k}^T P_{k|k}^{-1} N_{k}
  \end{array}
  \end{equation}
Taking conditional expectation $E[V(e_{k|k}^c)|e_{k-1|k-1}^c]$ results in the following inequality: 
  \begin{equation}
 \begin{array}{l} \label{eq:equation23}
\mathbb{E}\left\{ 
V(e_{k|k}^c)|e_{k-1|k-1}^c \right\} - V(e_{k-1|k-1}^c)
\le -\gamma V(e_{k-1|k-1}^c)\\ + \mathbb{E}\left\{  L_{k}^T P_{k|k}^{-1}[2\widetilde{A}e_{k-1|k-1}^c+L_{k}]^T|e_{k-1|k-1}^c\right\}\\ + \mathbb{E}\left\{  2N_{k}^T P_{k|k}^{-1}[\widetilde{A}e_{k-1|k-1}^c+L_{k}]|e_{k-1|k-1}^c\right\} \\ + \mathbb{E}\left\{ N_{k}^T P_{k|k}^{-1} N_{k}|e_{k-1|k-1}^c\right\}.
  \end{array}
  \end{equation}
Each term can be computed as a follows:\\
\begin{itemize}

\item $\mathbb{E}\left\{  L_{k}^T P_{k|k}^{-1}[2\widetilde{A}e_{k-1|k-1}^c+L_{k}]^T|e_{k-1|k-1}^c\right\}$:\\

Using \eqref{eq:equation18}, Assumption \ref{Ass:1} and Assumption \ref{Ass:2} we can obtain the norm of $L_{k}$ as a follows:
  \begin{equation}
 \begin{array}{l} \label{eq:equation24}
\left\|L_{k}\right\| \le \left\|I-K_{k}C_{k}\right\|\left\|\theta_{f}(e_{k-1|k-1}^c)\right\|\\ +\left\|K_{k}\right\|\left\|\phi_{h}(e_{k-1|k-1}^c)\right\| + \left\|K_{k}\right\|\left\|\Delta C_{k}x_{k}\right\| \\ +\left\|I-K_{k}C_{k}\right\|\left\|\Delta A_{k-1}x_{k-1}\right\|  \le (1+\widetilde{k} \widetilde{c})\varepsilon_{\theta} \\ \times \left\|e_{k-1|k-1}^c\right\|^2 +\widetilde{k}\varepsilon_{\phi}\left\|e_{k-1|k-1}^c\right\|^2  +(1+\widetilde{k}\widetilde{c})\delta_{a}+\widetilde{k}\delta_{c}\\ =\varepsilon \left\|e_{k-1|k-1}^c\right\|^2 + \delta,
  \end{array}
  \end{equation}
  
where $\widetilde{k}$ is the upper bound for the Kalman gain $K_{k}$.
Using \eqref{eq:equation24} and under Assumption \ref{Ass:1} we can write:
  \begin{equation}
 \begin{array}{l} \label{eq:equation25}
L_{k}^T P_{k|k}^{-1}[2\widetilde{A}e_{k-1|k-1}^c+L_{k}] \le (\varepsilon \left\|e_{k-1|k-1}^c\right\|^2 + \delta)\frac{1}{\underline{p}}\\\times(2(\widetilde{a}+\widetilde{k}\widetilde{c})\left\|e_{k-1|k-1}^c\right\|+\xi_{k}' \varepsilon  \left\|e_{k-1|k-1}^c\right\| +\xi_{k}' \delta) \\= \kappa_{nl} \left\|e_{k-1|k-1}^c \right\|^3 +\kappa_{u} \left\|e_{k-1|k-1}^c \right\|
  \end{array}
  \end{equation}

\item $\mathbb{E}\left\{  2N_{k}^T P_{k|k}^{-1}[\widetilde{A}e_{k-1|k-1}^c +L_{k}]|e_{k-1|k-1}^c\right\}$:\\
From \eqref{eq:equation18} and considering that $N_{k}$ only depends on $w_{k-1}$ and $v_{k}$, we have:
\begin{equation}
 \begin{array}{l} \label{eq:equation26}
 \mathbb{E}\left\{  2N_{k}^T P_{k|k}^{-1}[\widetilde{A}e_{k-1|k-1}^c +L_{k}]|e_{k-1|k-1}^c\right\}=0
  \end{array}
  \end{equation} 
\item $\mathbb{E}\left\{ N_{k}^T P_{k|k}^{-1} N_{k}|e_{k-1|k-1}^c\right\}$: \\
From \eqref{eq:equation18} and under Assumption \ref{Ass:1}, we have: 
  \begin{equation}
 \begin{array}{l} \label{eq:equation27}
\mathbb E\left\{ 
N_{k}^T P_{k|k}^{-1} N_{k}\right\} \le \frac{1}{\underline{p}} \left\|I-K_{k}C_{k}\right\|^2 \mathbb E \left\{w_{k-1}^Tw_{k-1} \right\} \\
+ \frac{1}{\underline{p}} \left\|K_{k}\right\|^2 \mathbb E \left\{v_{k}^Tv_{k} \right\} \le \frac{1}{\underline{p}} (1+\widetilde{k}\widetilde{c})^2 \widetilde{w}+ \frac{1}{\underline{p}} \widetilde{k}^2 \widetilde{v}\\ = \kappa_{noise}

  \end{array}
  \end{equation}
  
\end{itemize}
Using \eqref{eq:equation22}-\eqref{eq:equation27} we establish the  following inequality:
  \begin{equation}
 \begin{array}{l}  \label{eq:equation28}
\mathbb{E}\left\{ 
V(e_{k|k}^c)|e_{k-1|k-1}^c \right\} - V(e_{k-1|k-1}^c)
\le -\gamma V(e_{k-1|k-1}^c)\\ + \kappa_{nl} \left\|e_{k-1|k-1}^c\right\|^3 +\kappa_{u} \left\|e_{k-1|k-1}^c\right\| +\kappa_{noise}.
  \end{array}
  \end{equation}
Defining: 
  \begin{equation} 
 \begin{array}{l}  \label{eq:equation29}
\left\|e_{k-1|k-1}^c\right\| \le \frac{\gamma}{2\bar{p}\kappa_{nl}}
  \end{array}
  \end{equation}
we can write:
  \begin{equation}
 \begin{array}{l}  \label{eq:equation30}
\kappa_{nl} \left\|e_{k-1|k-1}^c\right\|\left\|e_{k-1|k-1}^c\right\|^2 \le \frac{\gamma}{2\bar{p}} \left\|e_{k-1|k-1}^c\right\|^2.
  \end{array}
  \end{equation}
Hence from \eqref{eq:equation20}, \eqref{eq:equation28}-\eqref{eq:equation30}, Assumption \ref{Ass:1}.(iii) and Lemma 2.1 of \cite{48} we obtain the following inequality:
 \begin{equation}
 \begin{array}{l} \label{eq:equation31}
\mathbb{E}\left\{\left\|e_{k|k}^c\right\|^2\right\} \le \frac{\bar{p}}{\underline{p}}(1-\frac{\gamma}{2})^k\mathbb{E}\left\{\left\|e_{0|0}\right\|^2\right\} + \frac{\xi^{c}}{\underline{p}\gamma}
  \end{array}
  \end{equation}
which guarantees the boundedness of the estimation error under model uncertainty. 
\end{proof}

\begin{lemma} \label{2}
 Consider the stochastic nonlinear system with parameter uncertainties given by \eqref{eq:equation8} and  let Assumptions \ref{Ass:1}-\ref{Ass:2} be satisfied. Then the upper bound for the estimation error in the EKF, $e_{k|k}^c$, satisfies the following bound: 
 \begin{equation}
 \begin{array}{l} \label{eq:equation32}
\mathbb{E}\left\{\left\|e_{k|k}^E \right\|^2\right\} \le \frac{\bar{p}}{\underline{p}}(1-\gamma)^k\mathbb{E}\left\{\left\|e_{0|0}\right\|^2\right\} + \frac{\xi^{E}}{\underline{p}\gamma}
  \end{array}
  \end{equation}
if the initial estimation error satisfies $\left\|e_{0|0}\right\| \le \epsilon $ and $\xi^{E}$ is sufficiently small. Note that $\xi^{E}$ is a function of noise covariance and the parameter uncertainty and $0 <\gamma \le 1$.
\end{lemma}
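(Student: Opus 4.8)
The plan is to reproduce the Lyapunov argument of Lemma \ref{1} almost verbatim, replacing the cubature construction by the first-order Taylor linearization that defines the EKF, and then to account for the single place where the two filters diverge, namely the disposal of the higher-order remainder in the Lyapunov decrement. First I would derive the EKF error recursion. Since the EKF carries no cubature points, the prediction and update errors come directly from expanding $f(\cdot,u_{k-1})$ about $\hat{x}_{k-1|k-1}^E$ and $h(\cdot,u_k)$ about $\hat{x}_{k|k-1}^E$ to first order; the averaging terms $\frac{1}{m}\sum \Delta x_{k-1|k-1}^i$ appearing in \eqref{eq:equation14}--\eqref{eq:equation16} are simply absent. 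The result collapses to the same structural form as \eqref{eq:equation17}--\eqref{eq:equation18},
\begin{equation*}
e_{k|k}^E = \widetilde{A}\,e_{k-1|k-1}^E + L_k + N_k,
\end{equation*}
with $\widetilde{A}=[I-K_kC_k]A_{k-1}$, with $L_k$ collecting the remainders $\theta_f,\phi_h$ together with the uncertainty contributions $\Delta A_{k-1}x_{k-1}$ and $\Delta C_k x_k$, and with $N_k=[I-K_kC_k]w_{k-1}-K_kv_k$. The only conceptual change is that $\theta_f,\phi_h$ now denote the first-order EKF linearization remainders, which Assumption \ref{Ass:2} still bounds quadratically.

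With the recursion in hand I would reuse the Lyapunov function \eqref{eq:equation19}, invoke the sandwich bound of Assumption \ref{Ass:1}(iii), expand $V(e_{k|k}^E)$ exactly as in \eqref{eq:equation21}, and apply Lemma 1 of \cite{42} to obtain the EKF analogue of \eqref{eq:equation22}. Taking the conditional expectation given $e_{k-1|k-1}^E$ eliminates the cross term carrying $N_k$ (the noises are zero-mean and independent of the past, as in \eqref{eq:equation26}), while the bound \eqref{eq:equation24} on $\|L_k\|$ and the bound \eqref{eq:equation27} on $\mathbb{E}\{N_k^T P_{k|k}^{-1} N_k\}$ carry over unchanged up to relabelling of constants. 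This yields the EKF counterpart of \eqref{eq:equation28},
\begin{equation*}
\mathbb{E}\{V(e_{k|k}^E)\mid e_{k-1|k-1}^E\} - V(e_{k-1|k-1}^E) \le -\gamma V(e_{k-1|k-1}^E) + \kappa_{nl}^E\|e_{k-1|k-1}^E\|^3 + \kappa_{u}^E\|e_{k-1|k-1}^E\| + \kappa_{noise}^E.
\end{equation*}

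The step that distinguishes the EKF bound from the CKF bound is the treatment of the super-linear terms $\kappa_{nl}^E\|e\|^3+\kappa_{u}^E\|e\|$. In Lemma \ref{1} half of the decay budget was spent absorbing the cubic term via \eqref{eq:equation29}--\eqref{eq:equation30}, which is precisely what produced the factor $(1-\tfrac{\gamma}{2})^k$. For the EKF I would instead keep the full $-\gamma V$ term intact: under the hypothesis $\|e_{0|0}\|\le\epsilon$ and the mild-nonlinearity/analyticity region of Assumption \ref{Ass:3}, an induction on $k$ shows the error stays inside a bounded invariant set, so that $\|e_{k-1|k-1}^E\|$ is uniformly bounded and the super-linear terms, together with $\kappa_{noise}^E$, are dominated by a single constant proportional to $\xi^E$. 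The recursion then reduces to $\mathbb{E}\{V(e_{k|k}^E)\}\le (1-\gamma)\,\mathbb{E}\{V(e_{k-1|k-1}^E)\}+\xi^E/\underline{p}$, and Lemma 2.1 of \cite{48} yields \eqref{eq:equation32} with decay factor $(1-\gamma)^k$.

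I expect the main obstacle to be exactly this invariance argument: certifying that the EKF error never escapes the region where the quadratic remainder bounds of Assumption \ref{Ass:2} and the Taylor bound $\mu$ of Assumption \ref{Ass:3} remain valid, so that the super-linear terms genuinely admit a uniform constant bound. This is delicate because the EKF, retaining only first-order information, carries larger effective remainder constants $\varepsilon_{\theta},\varepsilon_{\phi}$ than the CKF, which simultaneously forces a tighter admissible region and inflates $\xi^E$ relative to $\xi^c$. Making that inflation explicit is in fact what the paper's central comparison requires, since the steady-state gap $\xi^E/(\underline{p}\gamma) > \xi^c/(\underline{p}\gamma)$ is the precise sense in which the CKF is the more accurate estimator.
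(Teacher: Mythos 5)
Your proposal reaches the stated bound, but it takes a genuinely different route from the paper at the decisive step. The paper's proof of Lemma \ref{2} defines the EKF error recursion by ``expanding in Taylor series expansion and ignoring higher order terms'': in its decomposition \eqref{eq:equation35} the term $L_{k}=[I-K_{k}C_{k}]\Delta A_{k-1}x_{k-1}-K_{k}\Delta C_{k}x_{k}$ contains \emph{only} the uncertainty contributions, with no $\theta_{f}$ or $\phi_{h}$ at all. Consequently $\|L_{k}\|$ is bounded by the constant $(1+\widetilde{k}\widetilde{c})\delta_{a}+\widetilde{k}\delta_{c}$, no cubic term $\kappa_{nl}\|e\|^{3}$ ever enters the Lyapunov decrement, the full $-\gamma V$ survives without any absorption step, and ``the rest is similar to Lemma \ref{1}'' immediately yields the $(1-\gamma)^{k}$ rate; the neglected remainders are then reintroduced separately as the $o(e^{E}_{k-1|k-1})$ term in the proof of Theorem \ref{theorem1}. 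You instead retain $\theta_{f},\phi_{h}$ in $L_{k}$, accept the cubic term, and dominate all super-linear terms by a constant folded into $\xi^{E}$ via a bounded invariant region. What your route buys is a more faithful error dynamics and an explicit mechanism for the inflation $\xi^{E}>\xi^{c}$, which the paper defers to Theorem \ref{theorem1}; what the paper's route buys is brevity and a $(1-\gamma)^{k}$ rate that holds by construction rather than by an extra domination argument. One caution on your execution: the claim that ``an induction on $k$ shows the error stays inside a bounded invariant set'' cannot hold almost surely under Gaussian (unbounded-support) noise, so strictly you must either invoke the tube condition as in \eqref{eq:equation29} or rely on the conditional-decrement framework of Lemma 2.1 of \cite{48} exactly as the paper does --- i.e., bound $\kappa_{nl}\|e\|^{3}+\kappa_{u}\|e\|$ by a constant on $\|e\|\le\epsilon$ rather than by a genuine invariance proof. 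That caveat is shared with the paper's own proofs (Lemma \ref{1} has the same tube restriction), so your argument is sound at the paper's level of rigor, provided you bound the cubic term by the constant $\kappa_{nl}\epsilon^{3}$ rather than absorbing it into $\frac{\gamma}{2}V$, since the latter bookkeeping would degrade your decay factor to $(1-\frac{\gamma}{2})^{k}$ and contradict the lemma as stated.
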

\begin{proof}
Expanding in Taylor series expansion and ignoring higher order terms, we can compute the prediction and estimation error under the parameter uncertainty as a follows:
 \begin{equation}
 \begin{array}{l} \label{eq:equation33}
e_{k|k-1}^E= A_{k-1}e_{k-1|k-1}^E +w_{k-1}  + \Delta A_{k-1}x_{k-1} 
  \end{array}
  \end{equation}
 \begin{equation}
 \begin{array}{l} \label{eq:equation34}
e_{k|k}^E = \widetilde{A}e_{k-1|k-1}^E +L_{k}+N_{k},
  \end{array}
  \end{equation}
where the estimation error consist of three terms:
  \begin{equation}
 \begin{array}{l} \label{eq:equation35}
 \widetilde{A}=[I-K_{k}C_{k}]A_{k-1}\\
 L_{k}= [I-K_{k}C_{k}]\Delta A_{k-1}x_{k-1}- K_{k}\Delta C_{k}x_{k}\\
 N_{k}=[I-K_{k}C_{k}]w_{k-1}-K_{k}v_{k}.
  \end{array}
  \end{equation}
The rest of the proof is similar to the proof of Lemma \ref{1}, and is omitted.
\end{proof}

\begin{theorem} \label{theorem1}
 Consider the stochastic nonlinear system with parameter uncertainties given by \eqref{eq:equation8} and  let Assumption \ref{Ass:1} and Assumption \ref{Ass:2}  be satisfied. Assuming that
 \begin{enumerate}
    \item The initial estimation error is sufficiently small, {\it i.e.} $\exists \epsilon$:
 \begin{equation}
 \begin{array}{l} \label{eq:equation36}
\left\|e_{0|0}\right\| \le  \epsilon
  \end{array}
  \end{equation}
\item The $\xi^{c}$ defined from Lemma \ref{1} and $\xi^{E}$ defined from Lemma \ref{2} are small enough. 
\item The process noise $w_{k}$ and the measurement noise $v_{k}$ are Gaussian distribution satisfying by \eqref{eq:equation4}-\eqref{eq:equation7}.
\end{enumerate}
Then, the CKF provides better estimation of the system states than the EKF, {\it i.e.}, 
 \begin{equation}
 \begin{array}{l} 
\mathbb{E}\left\{\left\|e_{k|k}^c \right\|\right\} \le \mathbb{E}\left\{\left\|e_{k|k}^E \right\|\right\}
  \end{array}
  \end{equation}
\end{theorem}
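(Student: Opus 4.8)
The plan is to obtain the theorem as a direct corollary of the two moment bounds already proved in Lemma~\ref{1} and Lemma~\ref{2}, so that no fresh error recursion is required. The only structural mismatch is that those lemmas control the \emph{second} moment $\mathbb{E}\{\|e_{k|k}\|^2\}$ whereas the statement concerns the \emph{first} moment $\mathbb{E}\{\|e_{k|k}\|\}$. I would bridge this gap with Jensen's inequality, $\mathbb{E}\{\|e_{k|k}\|\}\le(\mathbb{E}\{\|e_{k|k}\|^2\})^{1/2}$, applied to both filters. Since $t\mapsto\sqrt{t}$ is monotone increasing, it then suffices to show that the right-hand side of \eqref{eq:equation11} is dominated by the right-hand side of \eqref{eq:equation32}; the desired ordering of the first moments follows by taking square roots. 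This reduces the entire problem to the comparison of two explicit scalar functions of $k$.

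Next I would split each bound into its transient and steady-state parts and dispose of the transient first. Invoking hypothesis~(1) of the theorem, $\|e_{0|0}\|\le\epsilon$, the geometric terms $\tfrac{\bar{p}}{\underline{p}}(1-\tfrac{\gamma}{2})^{k}\mathbb{E}\{\|e_{0|0}\|^2\}$ and $\tfrac{\bar{p}}{\underline{p}}(1-\gamma)^{k}\mathbb{E}\{\|e_{0|0}\|^2\}$ are both bounded by $\tfrac{\bar{p}}{\underline{p}}\epsilon^2$ and, since $0<\gamma\le1$, decay to zero as $k\to\infty$. Thus beyond a finite horizon each bound is governed by its steady-state constant, and the comparison collapses to the single inequality $\xi^{c}\le\xi^{E}$ between the two uncertainty constants (each divided by the common factor $\underline{p}\gamma$). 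Hypotheses~(2) and~(3) guarantee that these constants are small and that the Gaussian noise assumptions used to build $\kappa_{noise}$ in \eqref{eq:equation27} hold identically for both filters, so $\kappa_{noise}$ contributes equally to $\xi^{c}$ and $\xi^{E}$.

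The crux, and the step I expect to be the main obstacle, is establishing $\xi^{c}\le\xi^{E}$ from the way the two constants were assembled. Comparing \eqref{eq:equation18} with \eqref{eq:equation35}, the noise term $N_{k}$ and the pure parameter-uncertainty part of $L_{k}$ are identical for the two filters, so the decisive difference lies in the treatment of the higher-order remainders $\theta_{f},\phi_{h}$. Here I would argue that the spherical-radial cubature rule integrates the nonlinearity against the Gaussian density and cancels the symmetric higher-order moments, so that under the mild-nonlinearity Assumption~\ref{Ass:3} (remainder bounded by $\mu$ on $\|x_{k-1}-a_{k-1}\|\le\rho$) together with Assumption~\ref{Ass:2}, the effective coefficients $\varepsilon_{\theta},\varepsilon_{\phi}$ entering the CKF are smaller than the first-order linearization bias retained by the EKF; this surplus nonlinear bias is precisely what inflates $\xi^{E}$ relative to $\xi^{c}$. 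The delicate points are that this cancellation must be quantified rigorously rather than merely asserted, and that the CKF carries the \emph{slower} contraction factor $1-\tfrac{\gamma}{2}>1-\gamma$, so the inequality is cleanest in the large-$k$ (steady-state) regime and, for finite $k$, genuinely relies on the small-initial-error hypothesis to prevent the larger CKF transient from dominating. Finally, I would make explicit that the argument orders the two error \emph{bounds}, and that identifying this ordering with the ordering of the true expected errors is the modelling convention adopted throughout, stated when closing the proof.
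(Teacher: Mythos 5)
There is a genuine gap, and it sits exactly where you predicted it would: the decisive inequality $\xi^{c}\le\xi^{E}$ is not only left unproven in your plan, but, given how the two constants are actually assembled in Lemma \ref{1} and Lemma \ref{2}, it points the wrong way. In Lemma \ref{1} the CKF term $L_{k}$ in \eqref{eq:equation18} \emph{retains} the higher-order remainders $\theta_{f},\phi_{h}$, whose coefficients $\varepsilon_{\theta},\varepsilon_{\phi}$ feed through \eqref{eq:equation24}--\eqref{eq:equation25} and the cubic-term absorption \eqref{eq:equation29}--\eqref{eq:equation30} into both $\xi^{c}$ and the \emph{degraded} contraction factor $(1-\gamma/2)$. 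In Lemma \ref{2}, by contrast, the EKF derivation discards the higher-order terms at the outset (``ignoring higher order terms'', \eqref{eq:equation33}--\eqref{eq:equation35}), so its $L_{k}$ contains only the parameter-uncertainty part, $\xi^{E}$ carries no nonlinearity contribution at all, and the rate is the faster $(1-\gamma)^{k}$. Taken at face value, the two lemma bounds therefore favour the \emph{EKF}: Lemma \ref{2} bounds an idealized error recursion that under-counts the EKF's true error. Your proposed quantification of the cubature rule's cancellation of symmetric moments would be a way to sharpen Lemma \ref{1}, but it cannot extract $\xi^{c}\le\xi^{E}$ from the lemmas as they stand, so your reduction of the theorem to that scalar comparison is a step that would fail.

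The paper resolves this differently: before comparing, it \emph{corrects} the EKF bound by re-inserting the neglected remainder as an additive term $o(e_{k-1|k-1}^{E})$ in \eqref{eq:equation37}, and then passes to the limit $(\left\|e_{0|0}\right\|,\xi^{c})\to(0,0)$ and $(\left\|e_{0|0}\right\|,\xi^{E})\to(0,0)$, so that the transients (including the slower CKF transient you correctly flagged) and the steady-state constants all vanish, leaving the comparison to be decided solely by the presence or absence of $o(e_{k-1|k-1}^{E})$: negligible under the mild-nonlinearity Assumption \ref{Ass:3}, giving $\mathbb{E}\left\{\left\|e_{k|k}^{c}\right\|\right\}\cong\mathbb{E}\left\{\left\|e_{k|k}^{E}\right\|\right\}$, and dominant when Assumption \ref{Ass:3} fails, giving the strict inequality. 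Your Jensen step and your transient/steady-state split are unobjectionable, and your closing caveat --- that the argument orders upper \emph{bounds} rather than the true expected errors --- applies equally to the paper's own proof, which is itself informal at the middle inequality of its limit chain. But the missing idea in your proposal is precisely the re-insertion of the neglected higher-order term into the EKF bound; without it, the comparison you set up cannot reach the theorem's conclusion.
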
 
\begin{proof}
We will compare these filters in terms of estimation error. 
Applying Lemma \ref{2} and including the effect of the high order terms in the EKF, equation \eqref{eq:equation32} can be rewritten as a follows:
 \begin{equation}
 \begin{array}{l} \label{eq:equation37}
\mathbb{E}\left\{\left\|e_{k|k}^E \right\|^2\right\} \le \frac{\bar{p}}{\underline{p}}(1-\gamma)^k\mathbb{E}\left\{\left\|e_{0|0}\right\|^2\right\} + \frac{\xi^{E}}{\underline{p}\gamma}\\ + o(e_{k-1|k-1}^E)
  \end{array}
  \end{equation}
Based on conditions of Theorem \ref{theorem1} and using \eqref{eq:equation11} and \eqref{eq:equation37}  we can write:
 \begin{equation}
 \begin{array}{l} \label{eq:equation38}
\lim_{(\left\|e_{0|0}\right\|,\xi^{c})\to (0,0)} \mathbb{E}\left\{\left\|e_{k|k}^c \right\|^2\right\} 	 \le \\ \lim_{(\left\|e_{0|0}\right\|,\xi^{c})\to (0,0)}  [\frac{\bar{p}}{\underline{p}}(1-\frac{\gamma}{2})^k\mathbb{E}\left\{\left\|e_{0|0}\right\|^2\right\} + \frac{\xi^{c}}{\underline{p}\gamma}] \le 
\\ \lim_{(\left\|e_{0|0}\right\|,\xi^{E})\to (0,0)} \mathbb{E}\left\{\left\|e_{k|k}^E \right\|^2\right\} \le
\\ 
\lim_{(\left\|e_{0|0}\right\|,\xi^{E})\to (0,0)} [\frac{\bar{p}}{\underline{p}}(1-\gamma)^k\mathbb{E}\left\{\left\|e_{0|0}\right\|^2\right\} + \frac{\xi^{E}}{\underline{p}\gamma}\\ + o(e_{k-1|k-1}^E)]
  \end{array}
  \end{equation}
 If the nonlinearities in the model are mild, then by Assumption \ref{Ass:3}, we can assume the error incurred in neglecting the high order terms ($o(e_{k-1|k-1}^E)$) is small enough. So, the CKF and EKF will result similar performance, {\it i.e.}, 
 \begin{equation}
 \begin{array}{l} \label{eq:equation38}
\mathbb{E}\left\{\left\|e_{k|k}^c \right\|\right\} \cong \mathbb{E}\left\{\left\|e_{k|k}^E \right\|\right\}
  \end{array}
  \end{equation}
   However, for the highly nonlinear systems, as long as Assumption \ref{Ass:3} is not satisfied for the high order terms, then  we can not ignore the higher order terms and the error of neglecting the high order term will enlarge the upper bound of convergence in the EKF method. So, CKF provides smaller upper bound for the estimation error rather than the EKF under the proposed region (process and measurement noise, uncertainty parameter and estimation error become small enough), {\it i.e.},
    \begin{equation}
 \begin{array}{l} \label{eq:equation38}
\mathbb{E}\left\{\left\|e_{k|k}^c \right\|\right\} < \mathbb{E}\left\{\left\|e_{k|k}^E \right\|\right\}
  \end{array}
  \end{equation}
This completes the proof.
\end{proof}
\section{Attack Model}  
 In this paper, we considered the following four
types of attacks \cite{33,43,44}.
\begin{enumerate}
\item Random attack: An adversary 
manipulates the sensor readings. This type of attack can be modelled as follows: 
\begin{equation} \label{eq:equation39}
y_{k}^a = h(x_{k},u_{k})+v_{k}+ \delta_{k}
\end{equation}
Where $y_{k}^a$, $v_{k}$ and $\delta_{k}$ are corrupted measurement, measurement noise and random attack vector injected to the measurement, respectively. In this paper, we assume that the random attack is bounded by arbitrary energy signal as a follows:
\begin{equation} \label{eq:equation40}
\left\|\delta_{k}\right\|\le \widetilde{\delta}
\end{equation}
\item Denial of Service (DoS) attack: An adversary prevents communication with the sensors by jamming the communication channels, thus flooding packets in the network. The DoS attack can be a barrier in sensor data transfer, control data transfer, or both. In this paper, we model the DoS attack as the lack of the updated measurements to the estimator: 
\begin{equation} 
y_{t}^a=
\left\{
                \begin{array} {l} \label{eq:equation41}

                h(x_{t1},u_{t1})+ v_{t} \ \ \ \ \ \ \ t\in (t_{1},t_{2}] \\
                h(x_{t},u_{t})+v_{t} \ \ \ \ \ \ \ t\notin (t_{1},t_{2}],
                \end{array}
\right.
\end{equation}
where the measurement is keep unchanged for $t\in (t_{1},t_{2}]$ and the updated measurement cannot be sent to the estimator .
\item Replay attack: A replay attack is a form of network attack in memorizes a string of sensor readings and transmit then at a later time. A replay attack on the output $i$ at $t\in [t_{1},t_{2}]$ can be modelled as follows:
\begin{equation} \label{eq:equation42}
y_{i,t}^a=h_{i}(x_{t-\Delta T},u_{t-\Delta T}) + v_{t} \ \ \ \ \ \ \ t\in [t_{1},t_{2}]
\end{equation}
where $\Delta T=t_{2}-t_{1}$ 
\item False data injection (FDI) attack: In this case, an attacker knows the system model, including the parameters $A$, $B$, $C$, $Q$, $R$, and gain $K$ and injects false sensor measurements in order to confuse the system when implementing a Kalman filter estimator with the $\chi^{2}$-detector \cite{45}. To clarify the idea, let's consider the following system \cite{46}:
\begin{equation} 
\left\{
                \begin{array} {l} \label{eq:equation43}
                x_{k} = f(x_{k-1},u_{k-1})+w_{k-1}\\
                y_{k} = h(x_{k},u_{k})+v_{k} \\
                 y_{k}^a = y_{k}+ a_{k}
                \end{array}
\right.
\end{equation}
where $y_{k}$, $y_{k}^a$ and $a_{k}$ are the measured output, corrupted measurement and FDI attack generated by the attacker, respectively. The FDI attack can be described as:
\begin{equation} \label{eq:equation44}
a_{k}=-y_{k}+\eta_{k}
\end{equation}
where $-y_{k}$ will cancel the original signal and $\eta$ is assumed to be an arbitrary bounded energy signal (detection threshold) with the following characteristic:
\begin{equation} \label{eq:equation45}
\left\|\eta_{k}\right\| \le \widetilde{\eta}
\end{equation}
\end{enumerate}

\begin{lemma} \label{3}
 Consider the nonlinear stochastic system under random attack of the form \eqref{eq:equation39}. Let's Assumption \ref{Ass:1} and Assumption \ref{Ass:2} hold. Then, if the initial estimation error satisfies $\left\|e_{0|0}\right\| \le \epsilon $ and $\psi^c$ is sufficiently small, then the estimation error $e_{k|k}^c$ using the CKF algorithm satisfies the following bound:
 \begin{equation}
 \begin{array}{l} \label{eq:equation46}
\mathbb{E}\left\{\left\|e_{k|k}^c\right\|^2\right\} \le \frac{\bar{p}}{\underline{p}}(1-\frac{\gamma}{2})^k\mathbb{E}\left\{\left\|e_{0|0}\right\|^2\right\} + \frac{\psi^c}{\underline{p}\gamma}
  \end{array}
  \end{equation}
Note that $\psi^c$ is a function of noise covariance and random attack and $0 <\gamma \le 1$.  
 \end{lemma}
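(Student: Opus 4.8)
The plan is to follow the same Lyapunov-based argument developed in the proof of Lemma \ref{1}, since the random-attack measurement model \eqref{eq:equation39} differs from the nominal output equation only through the additive, bounded term $\delta_{k}$ with $\left\|\delta_{k}\right\|\le\widetilde{\delta}$, and the parameter-uncertainty terms $\Delta A_{k-1}$, $\Delta C_{k}$ of Lemma \ref{1} are absent here. First I would reconstruct the CKF estimation-error recursion. Defining the cubature points as in \eqref{eq:equation12} and propagating them through $f$ and $h$, the prediction error $e_{k|k-1}^{c}$ retains the form $A_{k-1}e_{k-1|k-1}^{c}+\theta_{f}(e_{k-1|k-1}^{c})+w_{k-1}$, since the symmetric cubature points cancel the averaged perturbation terms exactly as in \eqref{eq:equation14}. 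The genuinely new step is the measurement update: substituting $y_{k}^{a}=h(x_{k},u_{k})+v_{k}+\delta_{k}$ into $\hat{x}_{k|k}^{c}=\hat{x}_{k|k-1}^{c}+K_{k}(y_{k}^{a}-\hat{y}_{k})$ and expanding $h$ around $\hat{x}_{k|k-1}^{c}$ produces an additional $-K_{k}\delta_{k}$ contribution in the estimation error.

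Collecting terms, I would write the error in the same three-term decomposition $e_{k|k}^{c}=\widetilde{A}e_{k-1|k-1}^{c}+L_{k}+N_{k}$, where $\widetilde{A}=[I-K_{k}C_{k}]A_{k-1}$ and $N_{k}=[I-K_{k}C_{k}]w_{k-1}-K_{k}v_{k}$ are unchanged, while now $L_{k}=[I-K_{k}C_{k}]\theta_{f}(e_{k-1|k-1}^{c})-K_{k}\phi_{h}(e_{k-1|k-1}^{c})-K_{k}\delta_{k}$. Using Assumption \ref{Ass:1}, Assumption \ref{Ass:2} and the attack bound \eqref{eq:equation40}, the nonlinear part of $L_{k}$ remains quadratic in $\left\|e_{k-1|k-1}^{c}\right\|$, while the attack contributes the constant $\left\|K_{k}\delta_{k}\right\|\le\widetilde{k}\widetilde{\delta}$; hence $\left\|L_{k}\right\|\le\varepsilon\left\|e_{k-1|k-1}^{c}\right\|^{2}+\delta'$ with $\delta'=\widetilde{k}\widetilde{\delta}$, mirroring \eqref{eq:equation24} but with the attack magnitude replacing the uncertainty bounds $\delta_{a},\delta_{c}$.

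From here the derivation is identical to Lemma \ref{1}: take the Lyapunov function $V(e)=e^{T}P^{-1}e$, apply the inequality from Lemma 1 of \cite{42}, take the conditional expectation, and evaluate the three resulting terms. The cross term involving $N_{k}$ still vanishes as in \eqref{eq:equation26}, because $w_{k-1}$ and $v_{k}$ are zero-mean and independent of $e_{k-1|k-1}^{c}$; the $L_{k}$ term yields a cubic-plus-linear bound $\kappa_{nl}\left\|e_{k-1|k-1}^{c}\right\|^{3}+\kappa_{u}\left\|e_{k-1|k-1}^{c}\right\|$; and the noise term is bounded by a constant $\kappa_{noise}$. Restricting $\left\|e_{k-1|k-1}^{c}\right\|\le\gamma/(2\bar{p}\kappa_{nl})$ as in \eqref{eq:equation29} absorbs the cubic term into the linear decay, and a final application of Lemma 2.1 of \cite{48} delivers the geometric bound \eqref{eq:equation46}, with $\psi^{c}$ now collecting the noise covariances together with the attack magnitude $\widetilde{\delta}$ entering through $\delta'$.

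The main obstacle, conceptually, is the bookkeeping of where the attack term is assigned: it must sit in $L_{k}$, the bounded part, rather than in $N_{k}$, the zero-mean stochastic part, because the vanishing of the cross term \eqref{eq:equation26} relies on $N_{k}$ being uncorrelated with the conditioning variable. Since $\delta_{k}$ is an arbitrary bounded signal rather than a zero-mean noise, placing it in $L_{k}$ and controlling it solely through \eqref{eq:equation40} is essential; everything else is a routine re-derivation of the constants. Finally, the smallness hypothesis on $\psi^{c}$ (equivalently on $\widetilde{\delta}$ and the noise levels) guarantees that the steady-state residual $\psi^{c}/(\underline{p}\gamma)$ stays inside the region where the quadratic restriction \eqref{eq:equation29} is consistent with the initial bound $\left\|e_{0|0}\right\|\le\epsilon$, so the recursion is well posed for all $k\ge0$.
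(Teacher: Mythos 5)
Your proposal is correct and follows essentially the same route as the paper: the identical three-term decomposition $e_{k|k}^{c}=\widetilde{A}e_{k-1|k-1}^{c}+L_{k}+N_{k}$ with the attack absorbed into $L_{k}$ as $-K_{k}\delta_{k}$ (the paper's bound \eqref{eq:equation51} yields exactly your $\varepsilon\left\|e_{k-1|k-1}^{c}\right\|^{2}+\widetilde{k}\widetilde{\delta}$, there called $\varepsilon_{\delta}$), followed by the same Lyapunov recursion via Lemma 1 of \cite{42}, the same vanishing cross term, the same restriction \eqref{eq:equation29}, and the same final appeal to Lemma 2.1 of \cite{48}. Your remark that $\delta_{k}$ must be placed in the bounded term $L_{k}$ rather than the zero-mean term $N_{k}$ is exactly the point implicit in the paper's treatment, so nothing is missing.
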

\begin{proof}
By considering the random attack presented in measurement, the prediction error and the estimation error are given by:
\begin{equation}
 \begin{array}{l} \label{eq:equation47}
e_{k|k-1}^c=x_{k}-\hat{x}^c_{k|k-1}=f(x_{k-1},u_{k-1})+w_{k-1}\\ - \frac{1}{m}\sum\limits_{m=1}^{m} f(X_{i,k-1|k-1},u_{k-1})=A_{k-1}e_{k-1|k-1}^c+\\ \theta_{f}(|e_{k-1|k-1}^c|)+w_{k-1} 
  \end{array}
  \end{equation}
  \begin{equation}
 \begin{array}{l} \label{eq:equation48}
e_{k|k}^c=x_{k}-\hat{x}_{k|k}=f(x_{k-1},u_{k-1}) +w_{k-1}\ -\hat{x}^c_{k|k-1} \\ -K_{k}(y_{k}-h(\hat{x}^c_{k|k-1}))=A_{k-1}e_{k-1|k-1}^c +\theta_{f}(|e_{k-1|k-1}^c|)\\ +w_{k-1}-K_{k}(h(x_{k},u_{k})+v_{k}+ \delta_{k}\\-\left\{ 
\frac{1}{m}\sum\limits_{m=1}^{m} h(\hat{x}^c_{k-1|k-1}+\Delta x_{k-1|k-1}^i,u_{k-1})\right\})
  \end{array}
  \end{equation}
 Thus 
 \begin{equation}
 \begin{array}{l} \label{eq:equation49}
e_{k|k}^c=\widetilde{A}e_{k-1|k-1}^c+L_{k}+N_{k}
\end{array}
\end{equation}

Where the estimation error consist of three terms:
  \begin{equation}
 \begin{array}{l} \label{eq:equation50}
 \widetilde{A}=[I-K_{k}C_{k}]A_{k-1}\\
 L_{k}=[I-K_{k}C_{k}]\theta_{f}(e_{k-1|k-1}^c)-K_{k}\phi_{h}(e_{k-1|k-1}^c)- K_{k} \delta_{k} \\
 N_{k}=[I-K_{k}C_{k}]w_{k-1}-K_{k}v_{k}
  \end{array}
  \end{equation}

We will follow all steps from \eqref{eq:equation19} to \eqref{eq:equation22} which yields the inequality \eqref{eq:equation23}. All terms in the inequality \eqref{eq:equation23} will be the same except the second term which can be rewritten as a following: 
  \begin{equation}
 \begin{array}{l} \label{eq:equation51}
\left\|L_{k}\right\| \le \left\|I-K_{k}C_{k}\right\| \left\|\theta_{f}(e_{k-1|k-1}^c)\right\| +\left\|K_{k}\right\|\left\|\phi_{h}(e_{k-1|k-1}^c)\right\| \\ + \left\|K_{k}\right\|\left\|\delta_{k} \right\| \le (1+\widetilde{k}\widetilde{c})\varepsilon_{\theta} \left\|e_{k-1|k-1}^c\right\|^2 +\widetilde{k}\varepsilon_{\phi} \left\|e_{k-1|k-1}^c\right\|^2 \\ + \widetilde{k}\widetilde{\delta}=\varepsilon \left\|e_{k-1|k-1}^c\right\|^2 + \varepsilon_{\delta}
\end{array}
\end{equation}
Using \eqref{eq:equation18} and under Assumption \ref{Ass:1} and Assumption \ref{Ass:2} we have:
  \begin{equation}
 \begin{array}{l} \label{eq:equation52}
L_{k}^T P_{k|k}^{-1}[2\widetilde{A}e_{k-1|k-1}^c+L_{k}] \le (\varepsilon \left\|e_{k-1|k-1}^c\right\|^2 + \varepsilon_{\delta})\frac{1}{\underline{p}}\\\times(2(\widetilde{a}+\widetilde{k}\widetilde{c})\left\|e_{k-1|k-1}^c\right\|+\xi_{k}' \varepsilon  \left\|e_{k-1|k-1}^c\right\| +\xi_{k}' \varepsilon_{\delta}) \\= \kappa_{nl} \left\|e_{k-1|k-1}^c\right\|^3 +\kappa_{a} \left\|e_{k-1|k-1}^c\right\|
\end{array}
\end{equation}
Using \eqref{eq:equation23}, \eqref{eq:equation26}, \eqref{eq:equation27} and \eqref{eq:equation52} the following inequality can be established:
  \begin{equation}
 \begin{array}{l} \label{eq:equation53}
\mathbb{E}\left\{ 
V(e_{k|k}^c)|e_{k-1|k-1}^c \right\} - V(e_{k-1|k-1}^c)
\le -\gamma V(e_{k-1|k-1}^c)\\ + \kappa_{nl} \left\|e_{k-1|k-1}^c\right\|^3 +\kappa_{a} \left\|e_{k-1|k-1}^c\right\| +\kappa_{noise}
  \end{array}
  \end{equation}
Thus, from \eqref{eq:equation20}, \eqref{eq:equation29}, \eqref{eq:equation30}, \eqref{eq:equation53}, Assumption \ref{Ass:1}.(iii) and Lemma 2.1 of \cite{48} we can obtain the following inequality:
\begin{equation} 
 \begin{array}{l} \label{eq:equation54}
\mathbb{E}\left\{\left\|e_{k|k}^c\right\|^2\right\} \le \frac{\bar{p}}{\underline{p}}(1-\frac{\gamma}{2})^k\mathbb{E}\left\{\left\|e_{0|0}\right\|^2\right\} + \frac{\psi^c}{\underline{p}\gamma}
  \end{array}
  \end{equation}
\end{proof}

\begin{lemma} \label{4}
 Consider the nonlinear stochastic system under random attack as stated in \eqref{eq:equation39}. Let's Assumptions \ref{Ass:1}-\ref{Ass:2} hold. 
 Then, there exist upper bound for the estimation error $e_{k|k}^E$ using the EKF algorithm as a follows:
 \begin{equation}
 \begin{array}{l} \label{eq:equation55}
\mathbb{E}\left\{\left\|e_{k|k}^E\right\|^2\right\} \le \frac{\bar{p}}{\underline{p}}(1-\gamma)^k\mathbb{E}\left\{\left\|e_{0|0}\right\|^2\right\} + \frac{\psi^E}{\underline{p}\gamma} 
  \end{array}
  \end{equation}
if the initial estimation error satisfies $\left\|e_{0|0}\right\| \le \epsilon $ and $\psi^E$ is sufficiently small. Note that $\psi^E$ is a function of noise covariance and random attack and $0 <\gamma \le 1$.
 \end{lemma}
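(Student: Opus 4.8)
The plan is to mirror the argument of Lemma~\ref{3} (the CKF under random attack) together with the reduction used in Lemma~\ref{2} (the EKF under model uncertainty), specializing the same Lyapunov machinery to the EKF driven by the random attack \eqref{eq:equation39}. First I would write the EKF prediction and update error recursions. Because the EKF linearizes $f$ and $h$ and discards the higher-order remainders $\theta_f$ and $\phi_h$, the post-update error again collapses to the canonical form $e_{k|k}^E = \widetilde{A}e_{k-1|k-1}^E + L_k + N_k$, with $\widetilde{A} = [I-K_kC_k]A_{k-1}$ and $N_k = [I-K_kC_k]w_{k-1} - K_kv_k$ exactly as in \eqref{eq:equation50}. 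The essential difference is that, with no model uncertainty present and the only exogenous corruption being the attack injected through the measurement, the driving term reduces to $L_k = -K_k\delta_k$.

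Next I would bound $L_k$. Using Assumption~\ref{Ass:1} and the attack bound \eqref{eq:equation40}, we obtain $\|L_k\| \le \|K_k\|\|\delta_k\| \le \widetilde{k}\widetilde{\delta} = \varepsilon_\delta$, a pure constant carrying no dependence on $\|e_{k-1|k-1}^E\|$. This is the structural crux that separates the EKF bound from the CKF bound: in Lemma~\ref{3} the nonlinear remainders generated the cubic term $\kappa_{nl}\|e_{k-1|k-1}^c\|^3$ in \eqref{eq:equation52}, whereas here the cross product $L_k^T P_{k|k}^{-1}[2\widetilde{A}e_{k-1|k-1}^E + L_k]$ is at most affine in $\|e_{k-1|k-1}^E\|$, i.e.\ of the form $\kappa_a\|e_{k-1|k-1}^E\|$ plus a constant, with no cubic contribution at all.

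I would then reproduce the Lyapunov computation \eqref{eq:equation19}--\eqref{eq:equation23} essentially verbatim: set $V = e^T P^{-1} e$, expand $V(e_{k|k}^E)$, invoke Lemma~1 of \cite{42} for the one-step contraction, take the conditional expectation, observe that the noise cross term vanishes by the independence argument of \eqref{eq:equation26}, and bound the noise quadratic term by $\kappa_{noise}$ as in \eqref{eq:equation27}. This delivers $\mathbb{E}\{V(e_{k|k}^E)|e_{k-1|k-1}^E\} - V(e_{k-1|k-1}^E) \le -\gamma V(e_{k-1|k-1}^E) + \kappa_a\|e_{k-1|k-1}^E\| + \kappa_{noise}$. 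Because no cubic term appears, I do not need the auxiliary restriction \eqref{eq:equation29} that the CKF proof used to absorb $\kappa_{nl}\|e\|^3$ into half of the decay budget; consequently the full coefficient $\gamma$ survives rather than $\gamma/2$. Applying Lemma~2.1 of \cite{48} then yields the claimed geometric bound with factor $(1-\gamma)^k$ and steady-state offset $\psi^E/(\underline{p}\gamma)$, where $\psi^E$ collects the noise-covariance and random-attack contributions.

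The hard part is not any delicate estimate but rather the justification that the higher-order remainders may legitimately be dropped so that $L_k$ carries no $\|e\|^2$ dependence, which is what produces the faster $(1-\gamma)^k$ factor for the EKF. This is precisely the mild-nonlinearity modelling convention already invoked in Lemma~\ref{2} (and quantified through Assumption~\ref{Ass:3}), so I would cite that reduction rather than re-derive it; the remainder of the proof is then a direct transcription of the steps following \eqref{eq:equation23} in Lemma~\ref{3}, and I would state that they are omitted for brevity.
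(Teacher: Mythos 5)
Your proposal is correct and follows exactly the route the paper intends: the paper's own proof of Lemma \ref{4} is simply the one-line remark that it combines Lemma \ref{3} (random-attack setting) with the EKF reduction of Lemma \ref{2}, which is precisely the transcription you carry out, including the key structural point that dropping $\theta_f,\phi_h$ leaves $L_k=-K_k\delta_k$, eliminates the cubic term, makes the restriction \eqref{eq:equation29} unnecessary, and so preserves the full decay factor $(1-\gamma)^k$ instead of $(1-\gamma/2)^k$. Nothing in your argument deviates from or falls short of the paper's (omitted) proof.
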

\begin{proof}
The proof is based on Lemma \ref{3} and analogous with Lemma \ref{2}, hence omitted.
\end{proof}

\begin{theorem} \label{theorem2}
Consider the nonlinear stochastic system under random attack given by \eqref{eq:equation39}. Let's Assumption \ref{Ass:1} and Assumption \ref{Ass:2} hold. Assuming that
 \begin{enumerate}
    \item The initial estimation error is sufficiently small, {\it i.e.} $\exists \epsilon$:
 \begin{equation}
 \begin{array}{l} \label{eq:equation36}
\left\|e_{0|0}\right\| \le  \epsilon
  \end{array}
  \end{equation}
Where $\epsilon$ is sufficiently Small. 
\item The $\psi^{c}$ defined from Lemma \ref{3} and $\psi^{E}$ defined from Lemma \ref{4} are small enough. 
\item The process noise $w_{k}$ and the measurement noise $v_{k}$ are Gaussian distribution satisfying by \eqref{eq:equation4}-\eqref{eq:equation7}.
\end{enumerate}
Then, the CKF provides better estimation of the system states than the EKF, {\it i.e.},
 \begin{equation}
 \begin{array}{l} 
\mathbb{E}\left\{\left\|e_{k|k}^c \right\|\right\} \le \mathbb{E}\left\{\left\|e_{k|k}^E \right\|\right\}
  \end{array}
  \end{equation}
\end{theorem}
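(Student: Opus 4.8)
My plan is to mirror exactly the argument used in the proof of Theorem~\ref{theorem1}, since the two statements are structurally identical: there I compared the CKF and EKF bounds under model uncertainty (Lemmas~\ref{1}--\ref{2}), and here I compare them under the random-attack bounds of Lemmas~\ref{3}--\ref{4}. The key observation is that the EKF bound \eqref{eq:equation55} was derived by expanding in a Taylor series and \emph{discarding} the higher-order terms, whereas the CKF bound \eqref{eq:equation54} retains them through the spherical-radial cubature rule. First I would reinstate the discarded contribution in the EKF analysis, rewriting \eqref{eq:equation55} with an additional term $o(e_{k-1|k-1}^{E})$ so that the inequality honestly reflects the linearization error the EKF commits under attack.

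Next I would place the CKF bound \eqref{eq:equation54} and the augmented EKF bound side by side and pass to the limit $(\|e_{0|0}\|,\psi^{c})\to(0,0)$ and $(\|e_{0|0}\|,\psi^{E})\to(0,0)$, exactly as in \eqref{eq:equation38}. Under conditions (1)--(3) of Theorem~\ref{theorem2} the initial-error terms and the attack-dependent constants $\psi^{c},\psi^{E}$ both become negligible, so the surviving distinction between the two bounds is precisely the neglected higher-order term in the EKF. This reduces the whole comparison to assessing the size of $o(e_{k-1|k-1}^{E})$, which is controlled by Assumption~\ref{Ass:3}.

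I would then split into two regimes. When the nonlinearity is mild, Assumption~\ref{Ass:3} guarantees $\|o(\cdot)\|\le\mu$ is small, so the two upper bounds coincide asymptotically and $\mathbb{E}\{\|e_{k|k}^{c}\|\}\cong\mathbb{E}\{\|e_{k|k}^{E}\|\}$. When the system is strongly nonlinear and Assumption~\ref{Ass:3} fails to make the remainder negligible, the term $o(e_{k-1|k-1}^{E})$ strictly enlarges the EKF convergence bound while leaving the CKF bound unaffected, yielding the strict inequality $\mathbb{E}\{\|e_{k|k}^{c}\|\}<\mathbb{E}\{\|e_{k|k}^{E}\|\}$. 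Combining both regimes gives the claimed $\mathbb{E}\{\|e_{k|k}^{c}\|\}\le\mathbb{E}\{\|e_{k|k}^{E}\|\}$.

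The main obstacle, as in Theorem~\ref{theorem1}, is justifying rigorously that the cubature approximation underlying $\psi^{c}$ does not itself reintroduce a higher-order penalty comparable to $o(e_{k-1|k-1}^{E})$; this rests on the fact that the third-degree spherical-radial rule integrates the relevant moments of the Gaussian exactly, so the odd higher-order contributions that the EKF retains as error vanish in the CKF mean. I would argue this through the zero-mean cancellation of the $\tfrac{1}{m}\sum \theta_f(|\Delta x_{k-1|k-1}^{i}|)$ terms already exploited in \eqref{eq:equation14}, which is precisely what allows the attack term $-K_{k}\delta_{k}$ in \eqref{eq:equation50} to be absorbed into the small constant $\psi^{c}$ without a competing nonlinear penalty.
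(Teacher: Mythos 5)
Your proposal matches the paper's own treatment: the paper proves Theorem~\ref{theorem2} simply by noting that it "follows similar steps used in the proof of Theorem~\ref{theorem1}," and your argument is precisely that mirroring --- substituting the random-attack bounds of Lemmas~\ref{3}--\ref{4} for Lemmas~\ref{1}--\ref{2}, reinstating the EKF's neglected higher-order term, passing to the limit as in \eqref{eq:equation38}, and splitting into the mild and strongly nonlinear regimes via Assumption~\ref{Ass:3}. Your closing remark on why the cubature rule's moment-matching keeps $\psi^{c}$ free of a competing nonlinear penalty is a sensible elaboration of the same mechanism the paper already relies on in \eqref{eq:equation14} and \eqref{eq:equation50}, not a departure from it.
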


\begin{proof}
The proof follows similar steps used in the proof of Theorem \ref{theorem1}, hence omitted.
\end{proof}

\section{ATTACK/FAILURE DETECTOR}
When reading data is available, the nonlinear filters can be used to generate the estimated states. A detector is utilized to compare the actual sensor readings and the estimated output with the purpose of detection. If the difference between these two is above a threshold, an alarm is triggered to notify the presence of the attacker. In this paper, two types of detectors are employed for the detection of potential attacks.
\begin{enumerate}
\item $\chi^{2}$-detector: The $\chi^{2}$-detector is a residue-based detector which makes a decision based on the sum of squared residues $z_{k+1}$. The residue $z_{k+1}$ at time $k+1$ is defined \cite{33}:
\begin{equation} \label{eq:equation57}
z(t+1) = y(t+1)-\hat{y}(t+1)
\end{equation}

Then, the $\chi^{2}$-detector test is normalized by the covariance matrix of $z(t)$ as follows:
\begin{equation} \label{eq:equation58}
g(t) = z(t)^TB(t)z(t)
\end{equation}
In summary, the $\chi^{2}$-detector compares g(t) with the threshold obtained using
the $\chi^{2}$-detector-table \cite{47} to identify an attack. The only limitation encountered in the $\chi^{2}$-detector is detection of FDI attack, which Euclidean detector can be used for this purpose.
\item Euclidean detector: The Euclidean detector was proposed by Manandhar et al. \cite{33} to overcome the limitations of $\chi^{2}$-detector. We will show that FDI attacks may fail to get detected by $\chi^{2}$-detector. The Euclidean detector is defined by the following equation:
\begin{equation}  \label{eq:equation59}
 d=\sqrt{(y_{obs,1}-y_{est,1})^2 + \dots + (y_{obs,n}-y_{est,n})^2}
\end{equation}
where $y_{obs}=[y_{obs,1},y_{obs,2}, \dots , y_{obs,n}]$ represents measurement vector and $y_{est}=[y_{est,1},y_{est,2}, \dots , y_{est,n}]$ is the estimated output vector. 
\end{enumerate}

\section{Case Study}
In this section, we compare the performance of the nonlinear filters in the presence of modelling error and cyber attacks on the synchronous machine shown in Fig. \ref{fig:figure 1}.  Equations \eqref{eq:equation1} and \eqref{eq:equation2} represents the fourth-order nonlinear synchronous machine state-space model \cite{35 ,36}.    
\begin{align} \label{eq:equation1}
&[\delta \ \ \Delta \omega \ \ e_{q}' \ \ e_{d}']^T =[x_{1} \ \ x_{2}\ \ x_{3}\ \ x_{4}]^T 
\nonumber \\
&[T_{m}\ \ E_{fd}]^T =[u_{1} \ \ u_{2}]^T  \nonumber
\\
&\left\{
                \begin{array} {l}
                \dot{x_{1}}= \omega _{0}x_{2} \\
               \dot{x_{2}} = \frac{1}{J}(T_{m}-T_{e}-Dx_{2})  \\
               \dot{x_{3}} = \frac{1}{T_{do}'}(E_{fd}-x_{3}-(x_{d}-x_{d}')i_{d}) \\
                 \dot{x_{4}} = \frac{1}{T_{qo}'}(-x_{4}-(x_{q}-x_{q}')i_{q})\\
                \end{array}
\right.
\\ 
  &              \begin{array} {l}
                y_{1}(t)=\frac{V_{t}}{x_{d}'}(x_{3})\sin(x_{1})+\frac{V_{t}^2}{2}(\frac{1}{x_{q}}-\frac{1}{x_{q}'})\sin(2x_{1})\\
                \end{array}.\nonumber
\end{align}

Where $\omega_{0}=2\pi f_{0}$ is the nominal synchronous speed (elec.
rad/s), $\omega$ the rotor speed (pu), $T_{m}$ the mechanical input torque
(pu), $T_{e}$ the air-gap torque or electrical output power (pu), $E_{fd}$
the exciter output voltage or the field voltage as seen from the armature (pu), and $\delta$ the rotor angle in (elec.rad). The air-gap torque or electrical output power $T_{e}$ and the d-axis and q-axis currents ($i_{d}$,$i_{q})$ can be computed by following:
\begin{equation} 
\label{eq:equation2}
\left\{
                \begin{array} {l}
                T_{e}=\frac{V_{t}}{x_{d}'}(x_{3})\sin(x_{1})+\frac{V_{t}^2}{2}(\frac{1}{x_{q}}-\frac{1}{x_{q}'})\sin(2x_{1})\\
                i_{d}=\frac{x_{3}-V_{t}\cos(x_{1})}{x_{d}'}\\
                i_{q}=\frac{V_{t}\sin({x_{1})}}{x_{q}}\\
                \end{array}
\right.
\end{equation}
More details about parameters and variables are described in Table 1.\\
\begin{figure}[h]
  \begin{center}
\includegraphics[width=0.45\textwidth]{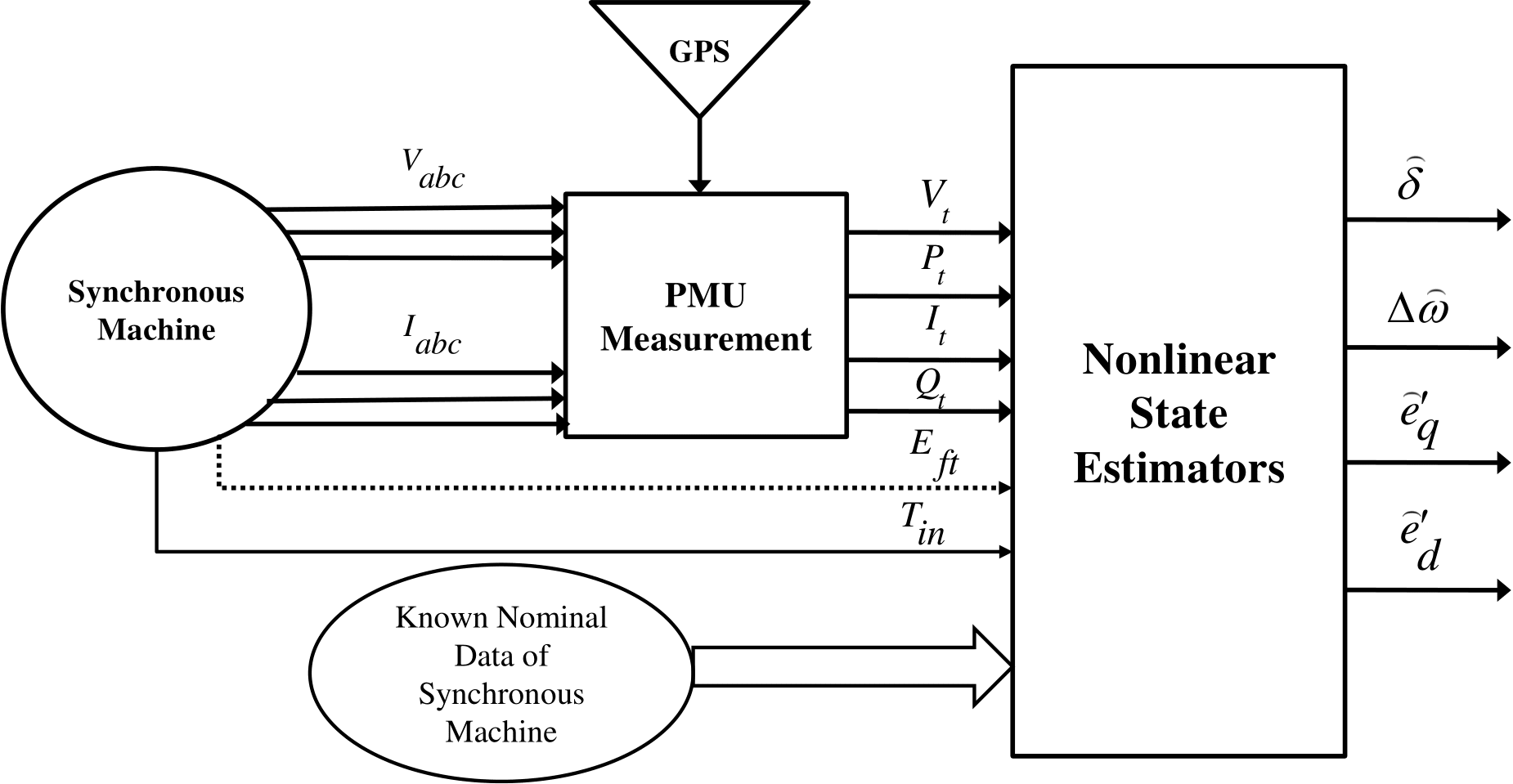}
  \end{center}
\caption{Diagram of nonlinear state estimator for a synchronous machine.} \label{fig:figure 1}
\end{figure}  
For the synchronous machine we consider both parameter uncertainty and cyber-attacks against the measurement.
Fig.\ref{fig:figure 2} shows the block diagram of the estimation and detection mechanism for the synchronous machine. Our scheme consists of five components namely, synchronous machine, PMU measurement, attacker, nonlinear state estimators and detectors. 
 The PMU device measures input voltage and current samples from the synchronous machine and transmits data packets to the estimator where an attacker may manipulate the transmitted data. A detector monitors the system behavior and identifies the presence of the attacker.
\begin{figure}[h]
  \begin{center}
\includegraphics[width=0.47\textwidth]{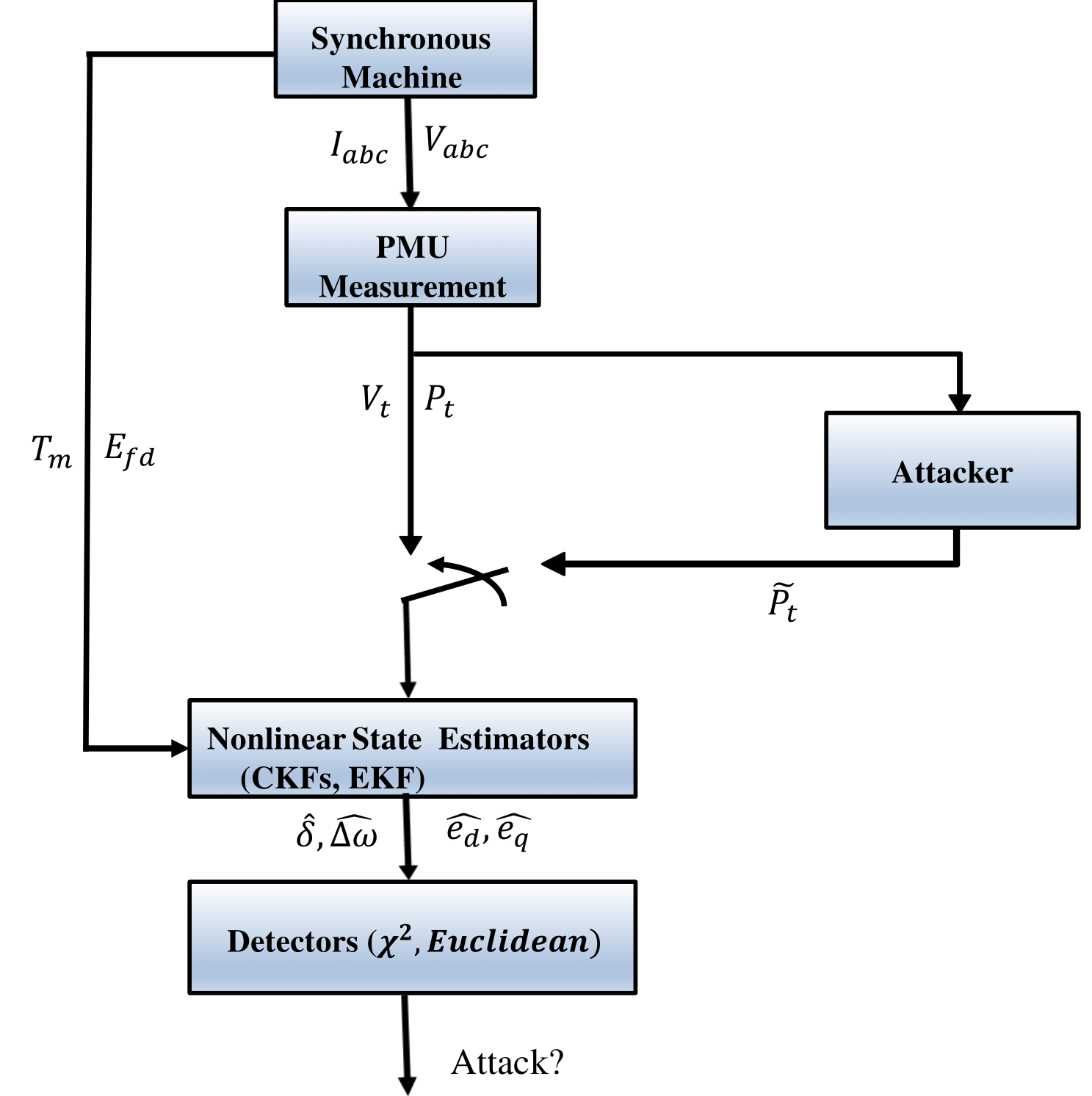}
  \end{center}
\caption{Block diagram of estimation and detection mechanism for the  synchronous machine.}
 \label{fig:figure 2}
\end{figure} 

In order to test the aforementioned estimators performance on the synchronous machine, we consider five scenarios:\\
Scenario 1: normal conditions (no modelling error and no attack) \\
Scenario 2: noisy measurements \\
Scenario 3: model uncertainty \\
Scenario 4: cyber-attack \\
Scenario 5: cyber-attack detection. \\
We implement discrete-time EKF, CKF and SCKF algorithms in MATLAB. 
Table 1. defines the parameters of the generator model. The initial value of the states, estimated states and state covariance matrix considered are: $x_{obs}=[0.4;0;0;0]$, $x_{est}=[0.4;0;0;0]$ and $P_{0}=diag([10^2, 10^2, 10^2, 10^2])$, respectively. In the simulations, two input $T_{m}=0.8$ (\textbf{constant}) and $E_{fd}=$ \textbf{step} with initial value of 2.11 and final value of 2.32 are applied to the system. The noise added to the process and measurement consists of white-noise sequences with covariance $w_{k}\sim (0,Q_{k})=(0,0.001^2\times I_{4\times4})$ and $v_{k}\sim (0,R_{k})=(0,0.01^2\times I)$.
\subsection{\textbf{Estimator's performance under normal conditions}}
In this section, we compare the estimation provided by the CKF, SCKF and EKF ignoring the effect of modelling error and cyber attack.   
Fig. \ref{fig:figure 3} shows the performance of EKF, CKF, and SCKF in normal conditions. Considering the nonlinear model, all estimated states converge to real states.  
\begin{figure}[h]
  \begin{center}
\includegraphics[scale=0.57]{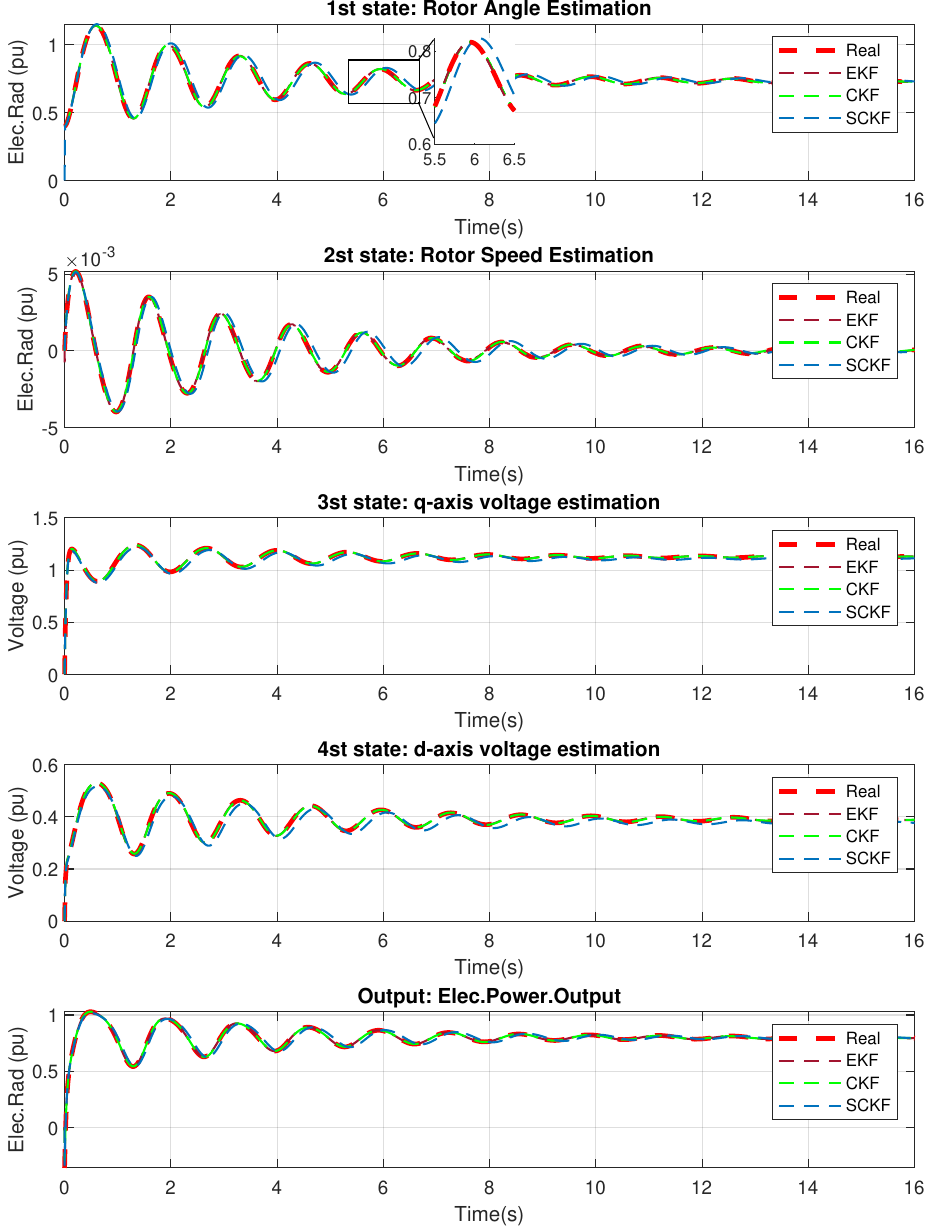}
  \end{center}
\caption{State estimation results under normal conditions.}
 \label{fig:figure 3}
\end{figure} 
\subsection{\textbf{Estimator's Performance with Noisy Measurements}}
In this part, the process and measurement covariance are set as $w_{k}\sim (0,Q_{k})=(0,0.001^2\times I_{4\times4})$ and $v_{k}\sim (0,R_{k})=(0,0.5^2\times I)$. Fig. \ref{fig:figure 4} shows the state estimation results under corrupted measurement which CKF and SCKF have better performance in terms of state tracking and robustness.  
\begin{figure}[h]
  \begin{center}
\includegraphics[scale=0.57]{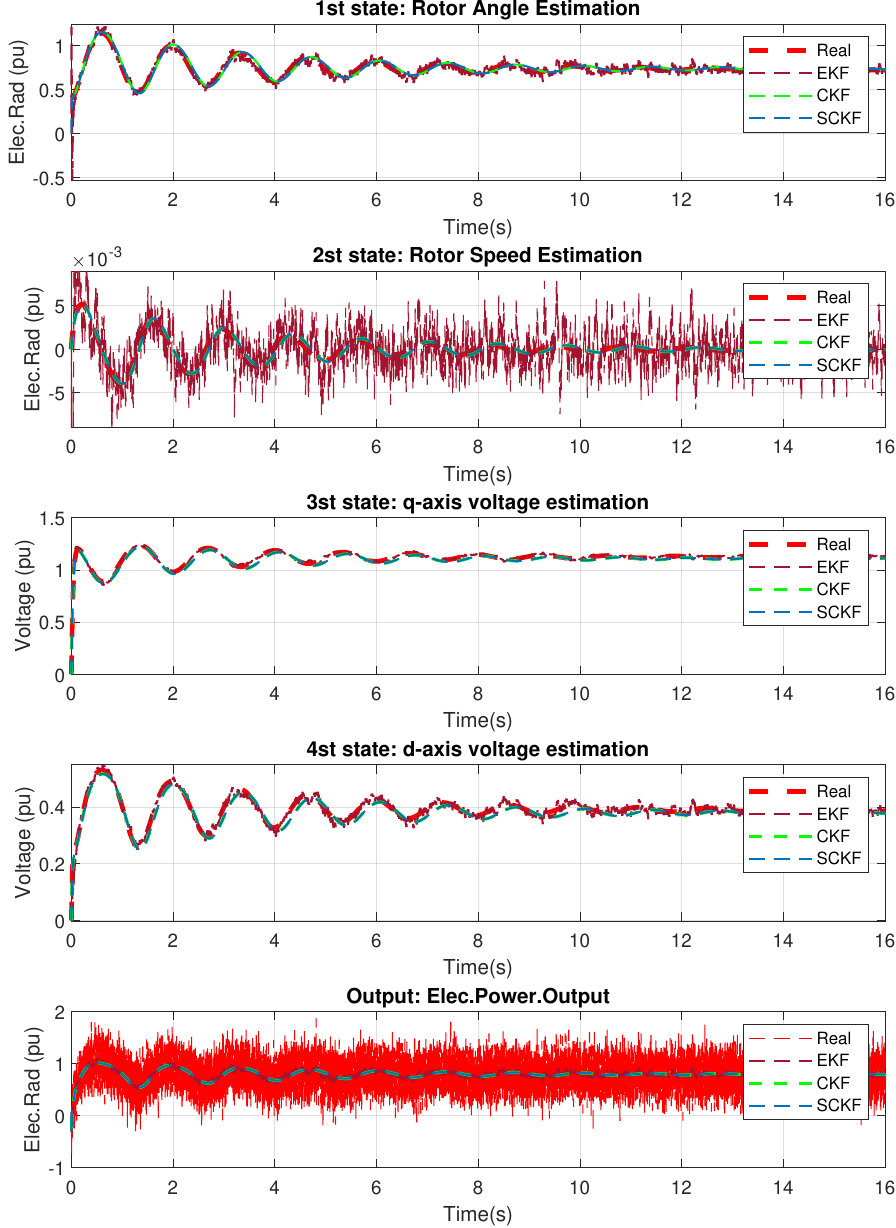}
  \end{center}
\caption{State estimation results with noisy measurements.}
 \label{fig:figure 4}
\end{figure} 
\subsection{\textbf{Estimator's performance under model uncertainty}}
In this section, we assume that the reactance ($x_{d}',x_{q}'$) in \eqref{eq:equation1} and \eqref{eq:equation2} change abruptly from 0.375 to 0.475 abruptly at $t=2.5s$ due to a short-circuit fault in the transmission line. The estimated states by the EKF, CKF, and SCKF are shown in Fig. \ref{fig:figure 5}. It can be seen that the estimation results using the EKF deviate from original states at $t=2.5s$ due to sudden change in the reactance. However, 
CKF and SCKF provide estimates with high accuracy. The example  is representative of the robustness of CKF and SCKF under parameter uncertainty. Also, the estimates provided by the CKF and SCKF do not dependent on initial value of the estimates being close to the real state. By contrast, the EKF is very sensitive to the initial value of the rotor angle. 
\begin{figure}[h]
  \begin{center}
\includegraphics[scale=0.58]{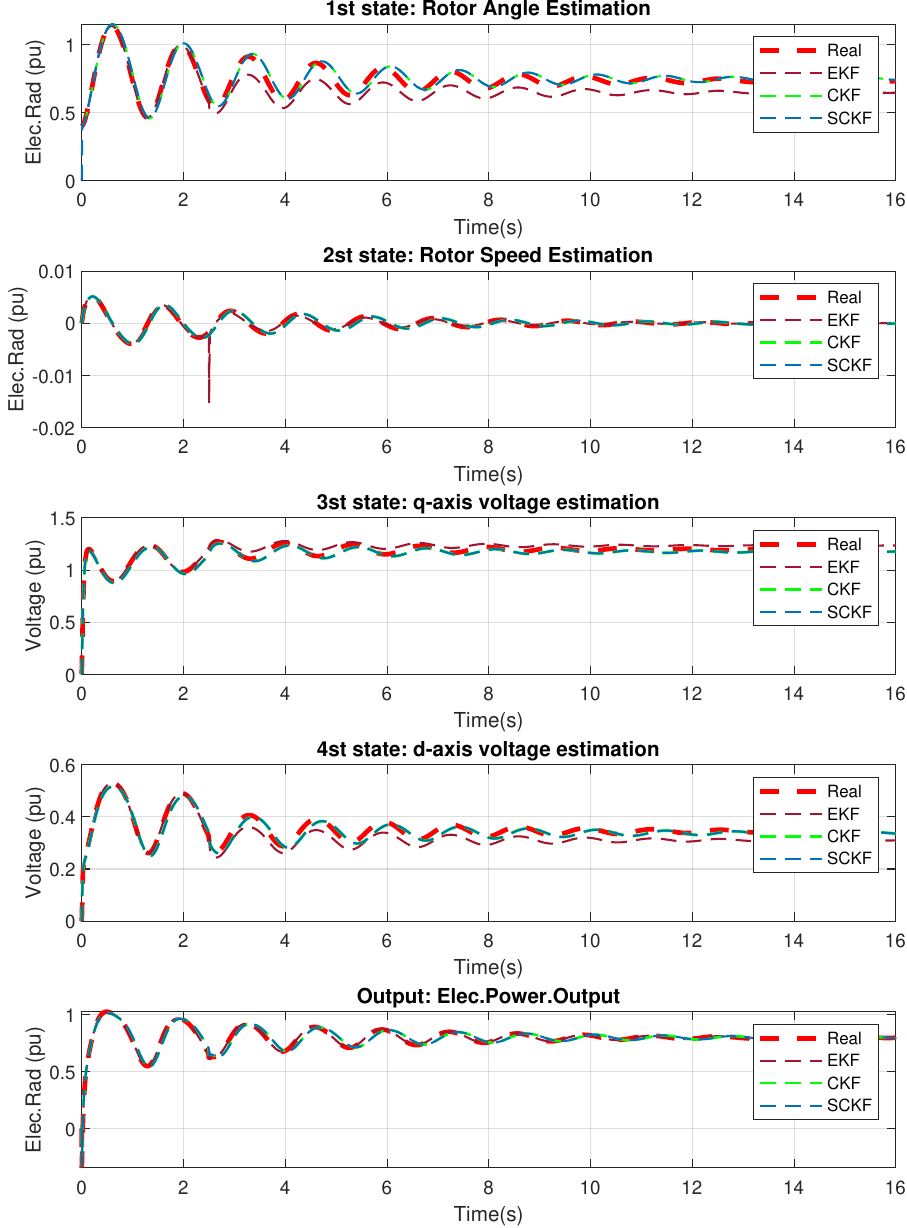}
  \end{center}
\caption{State estimation results under model uncertainty.}
 \label{fig:figure 5}
\end{figure} 
\subsection{\textbf{Estimator's Performance under Cyber Attacks}}
In this scenario, four types of attacks are applied to the system and the performance of estimators are compared.\\
For the modelling of random attack, we assume that $\delta_{t}=0.1sin(2\pi 60t)$ is added to the measurement at $t=0$.\\ Also, replay attack can be added to the measurement as a delay of the form $y_{i}^a(t)=y_{i}(t-0.3)$.\\
In order to model the DoS attack, we assume that the measurement is kept unchanged for $t \in [0.2s,1.8s]$.\\
For the modelling of FDI attack, we assume that the attacker knows the Kalman gain and manipulates the Kalman gain in order to corrupt the dynamic state estimation without the attack being detected. We model the FDI attack by modifying the actual Kalman gain ($K^a=K\times diag([0.05,0,0,0])$) and manipulating the measurement ($y^a(t)=y(t)+0.05$).  \\
Fig. \ref{fig:figure 6}, Fig. \ref{fig:figure 7}, Fig. \ref{fig:figure 8} and Fig. \ref{fig:figure 10} show state estimation results under random, DoS, replay, and FDI attacks. 
Both the CKF and SCKF closely track the real states with high accuracy, while the EKF does not. Also, it can be seen in the figures that there is a difference between estimated outputs using CKF and SCKF and measurement which can be used  in the attack detection.
\subsection{\textbf{Cyber-attack detection}}
In this scenario, we consider two cases: Firstly, the performance of different estimators using the $\chi^{2}$-detector for random, DoS and replay attacks.  Fig. \ref{fig:figure 9}, Fig. \ref{fig:figure 11} and Fig. \ref{fig:figure 12} show the normalized innovation ratio $g(t)$ obtained from $\chi^{2}$-detector under the aforementioned cyber attacks. It is clear that $g(t)$ passes the threshold in using both CKF and SCKF and triggers an alarm indicating the presence of an attack, while the EKF is not successful in doing so. Secondly, to handle the FDI attack, the Euclidean detector was employed as an alternative detector to overcome the limitation of $\chi^{2}$-detector. As can be seen in Fig. \ref{fig:figure 13}, the normalized innovation ratio $g(t)$ obtained from $\chi^{2}$-detector remains under the threshold, 
however, $g(t)$ obtained from the Euclidean detector is above the threshold and triggers an
alarm.
\begin{figure}[h]
  \begin{center}
\includegraphics[scale=0.58]{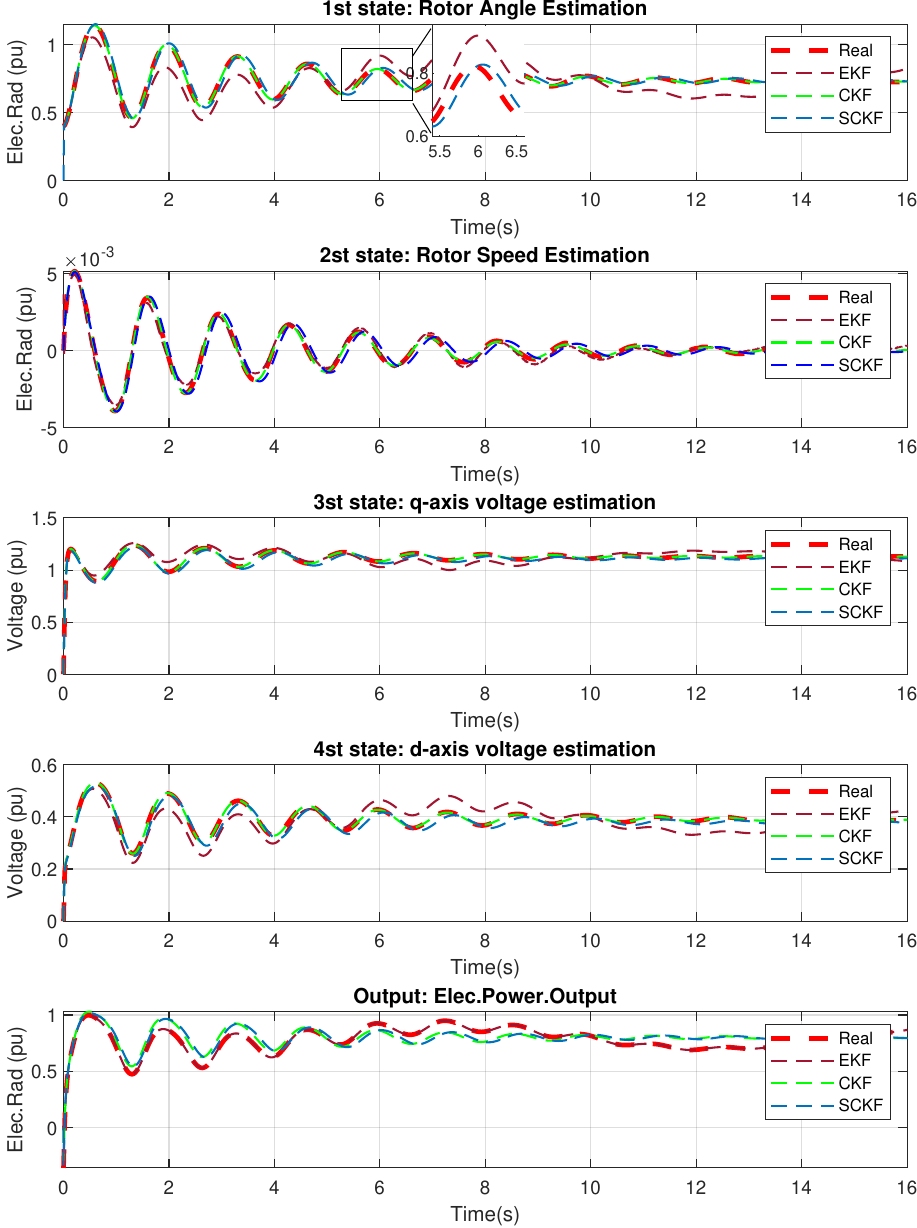}
  \end{center}
\caption{State estimation results under random attack.}
 \label{fig:figure 6}
\end{figure} 

\begin{figure}[h]
  \begin{center}
\includegraphics[scale=0.58]{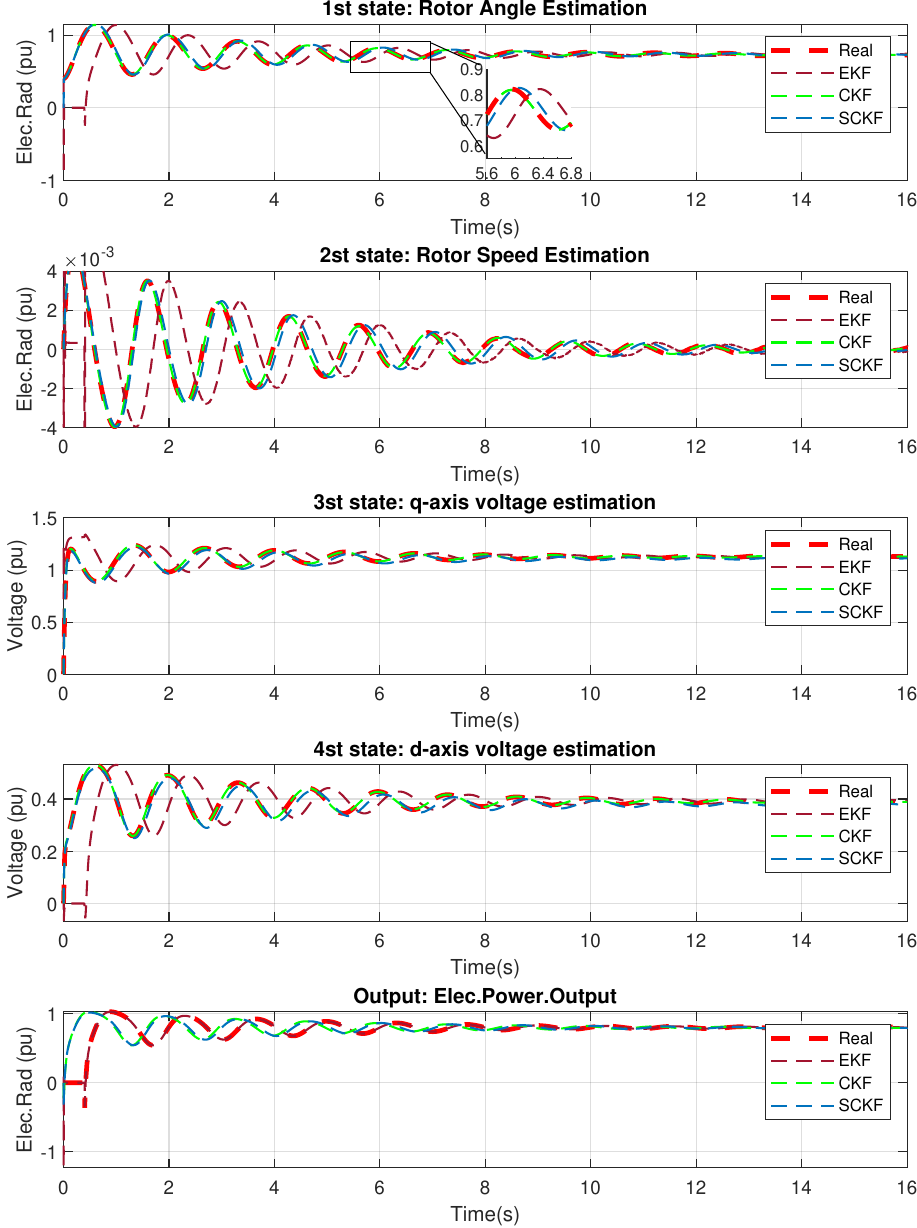}
  \end{center}
\caption{State estimation results under replay attack.}
 \label{fig:figure 7}
\end{figure} 
\begin{figure}[h]
  \begin{center}\includegraphics[scale=0.58]{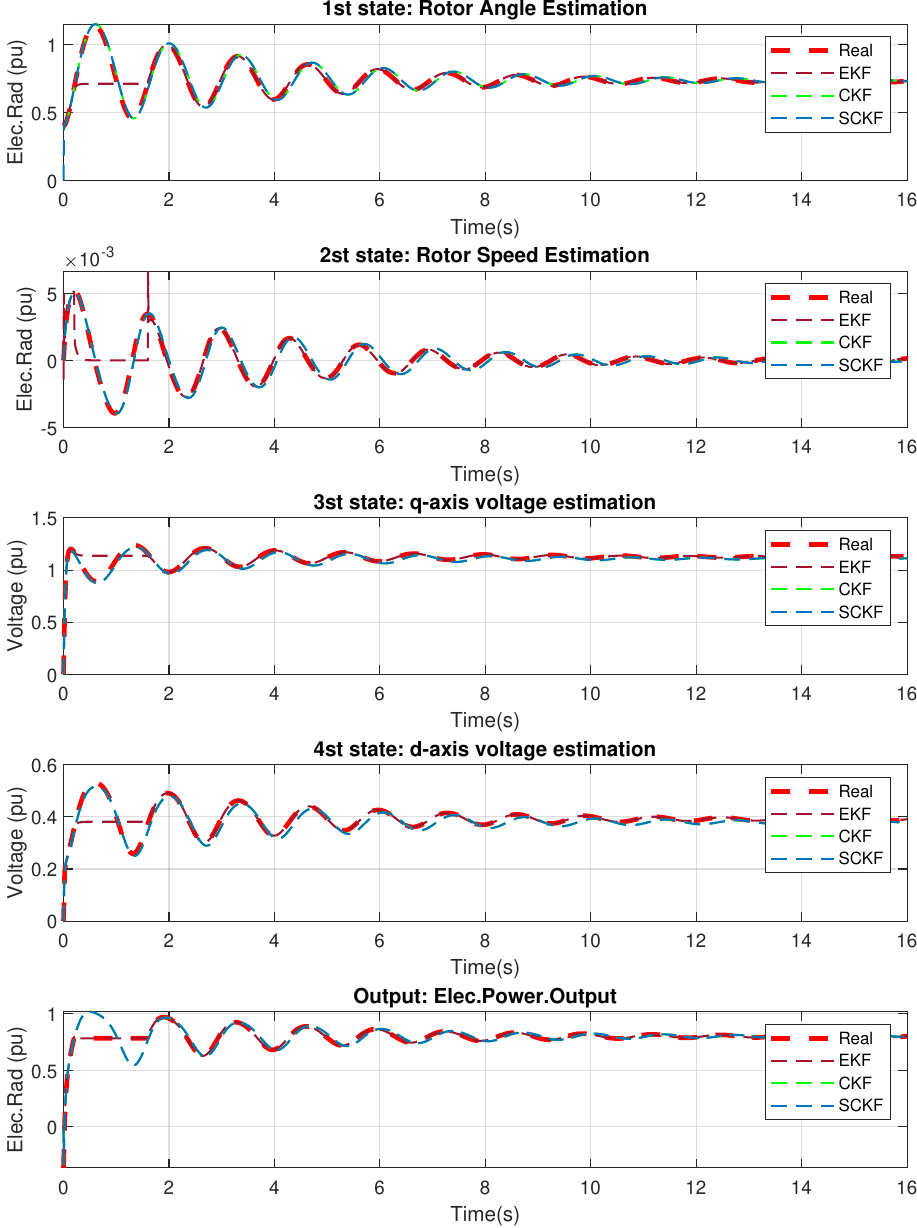}
  \end{center}
\caption{State estimation results under DoS attack.}
 \label{fig:figure 8}
\end{figure} 
\begin{figure}[h]
  \begin{center}
\includegraphics[scale=0.46]{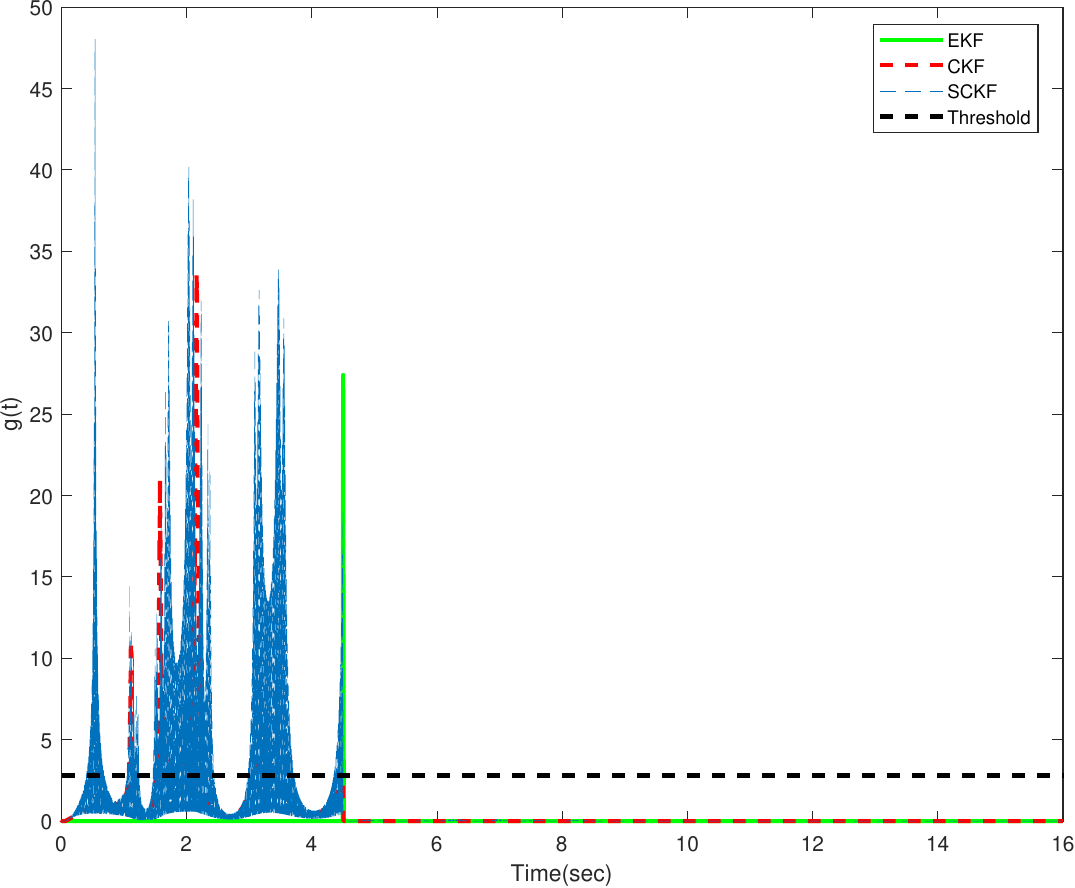}
  \end{center}
\caption{Cyber attack detection under DoS attack.}
 \label{fig:figure 9}
\end{figure} 

\begin{figure}[h]
  \begin{center}
\includegraphics[scale=0.58]{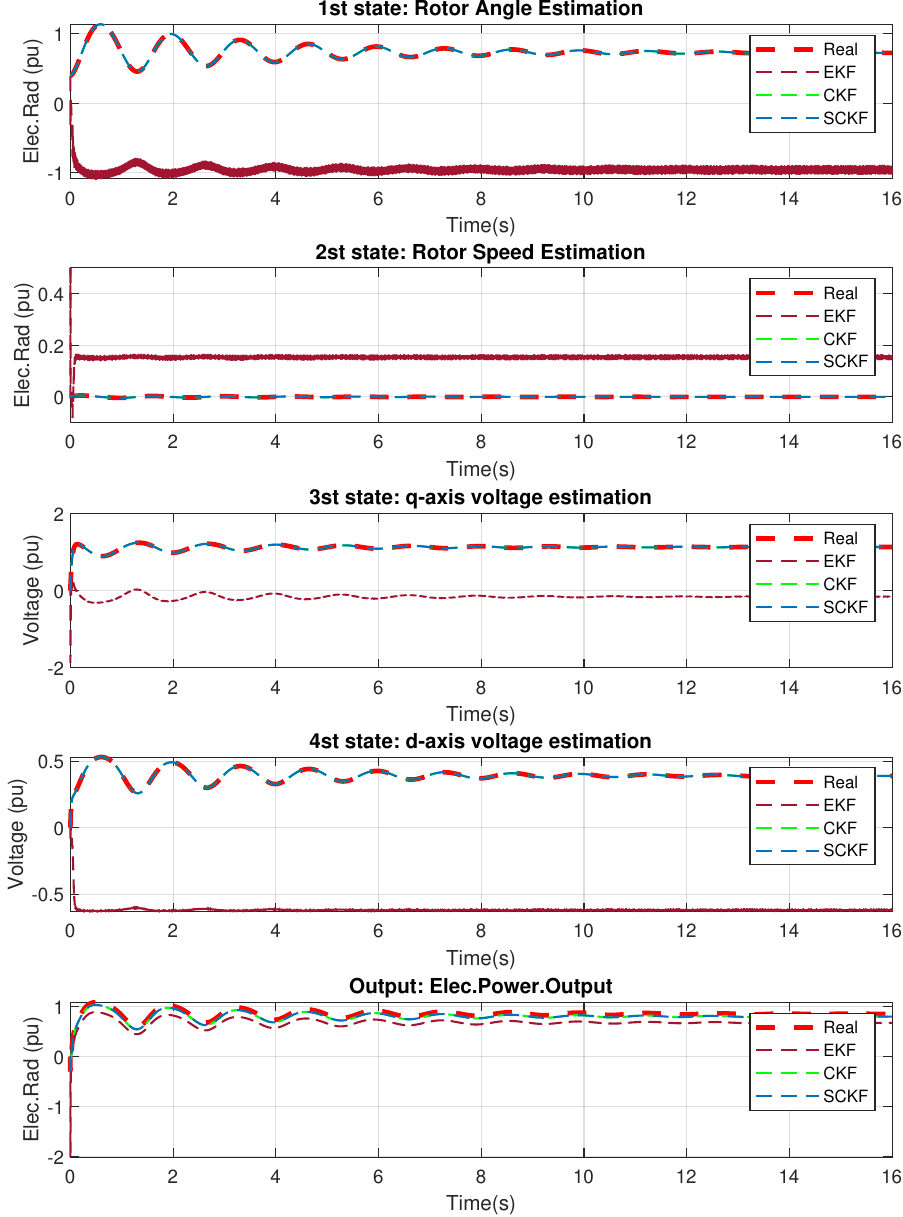}
  \end{center}
\caption{State estimation results under FDI attack}
 \label{fig:figure 10}
\end{figure} 

\begin{figure}[h]
  \begin{center}
\includegraphics[scale=0.46]{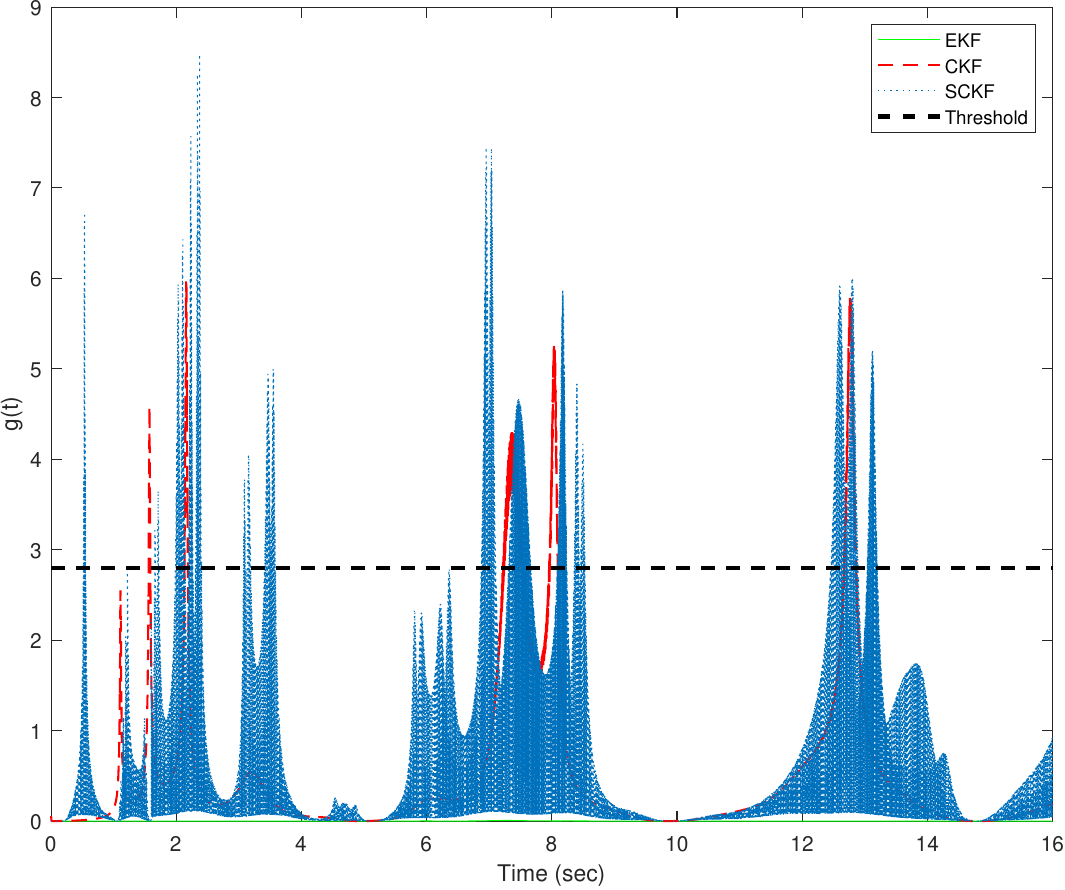}
  \end{center}
\caption{Cyber attack detection under random attack}
 \label{fig:figure 11}
\end{figure} 
\begin{figure}[h]
  \begin{center}
\includegraphics[scale=0.47]{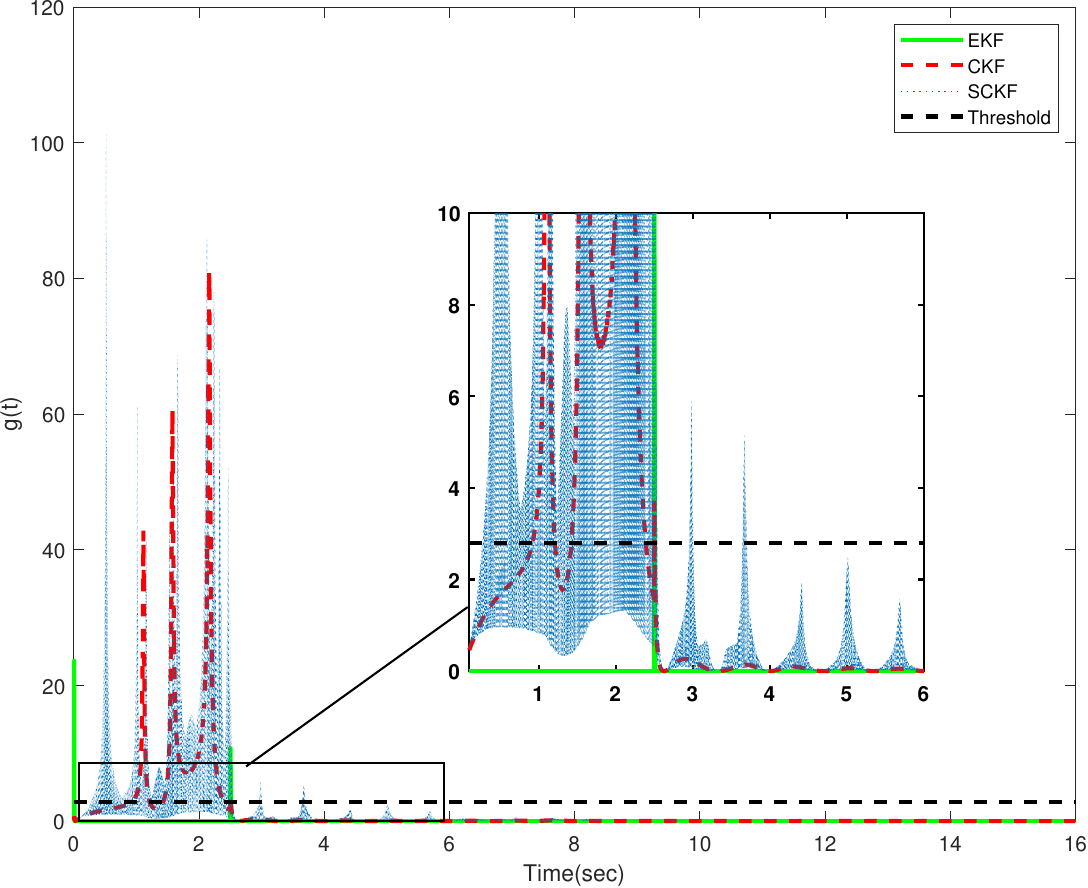}
  \end{center}
\caption{Cyber attack detection under replay attack.}
 \label{fig:figure 12}
\end{figure} 
\begin{figure}[h]
  \begin{center}
\includegraphics[scale=0.60]{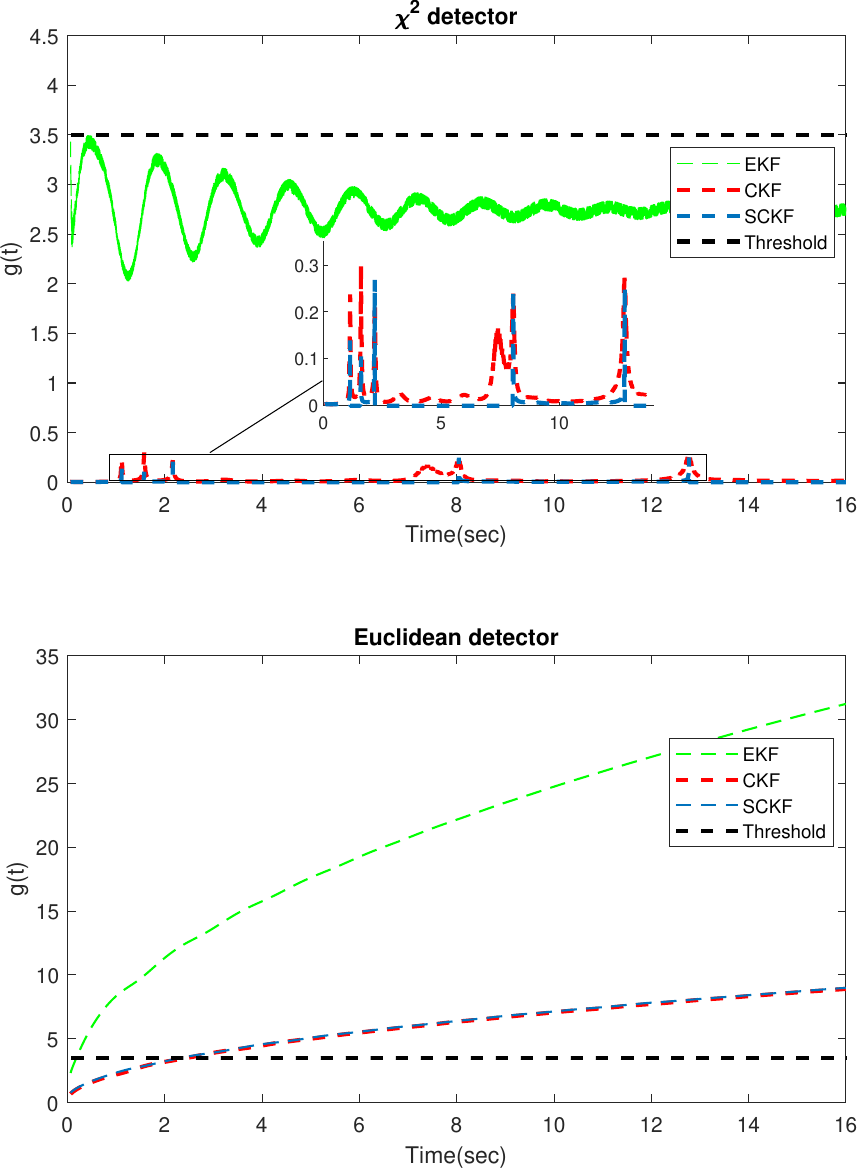}
  \end{center}
\caption{Cyber attack detection under FDI attack.}
 \label{fig:figure 13}
\end{figure}
\begin {table}[h]
\caption {Parameters and variables definition } 
\begin{center}
\begin{tabular}{llr}
\hline
%\cline{1-2}
Parameters   & Description & Value  \\
\hline  \vspace*{-0.75em}
\\
$D,J$  & Damping factor and inertia & 0.05,10 \\ & constant per unit  \\
$T_{m},T_{e}$          & Mechanical input and electrical torque        & 0.8     \\
$T_{do}' ,T_{qo}'$ & d-q transient open-circuit & 0.13,0.01 \\ & time constant       \\
$x_{d},x_{q}$       & d-q axis reactance     & 2.06,1.21      \\
$x_{d}',x_{q}'$       & d-q axis transient reactance     & 0.37,0.37      \\ 
$V_{t}$          & Terminal bus voltage       & No=1.02     \\
$V_{B}$          & Infinite bus voltage   & No=0.98 \\
$E_{fd}$          & Steady-state internal & No=2.29 \\ & voltage of armature       & No=2.29  \\
$\delta$   & 1st state, rotor angle       & No=0.82     \\
$\omega_{0}$  &  nominal synchronous speed      & No=377     \\
$\Delta \omega$  & 2nd state, rotor speed       & ---     \\
$e_{q}',e_{d}'$  & 3rd and 4th states, & ---\\ &  d-q transient voltage        \\
$i_{d}, i_{q}$   & d-q axis current & No=0.63, 0.51 \\
\hline
\end{tabular}
\end{center}
\end {table}

\section{Conclusion}
In this paper, we have compared the performance of two nonlinear filters, namely the EKF and the CKF under the affect of modeling error and cyber attacks. Based on theoretical analysis, we show that the CKF has better estimation accuracy than the EKF under some conditions. We also argue that the CKF is able to detect cyber attack reliably which is superior to the EKF.   
To justify our claims, we have investigated the performance of various filters on synchronous machine under different scenarios. Based on our extensive simulations, it can be seen that the CKF generates better estimation results than the EKF. In the future works, we will develop event-triggered CKF for the nonlinear dynamic system under cyber-attacks. 
\\
\\

%\appendix
\section*{Appendix A}
\section*{EKF ALGORITHM}

\noindent\rule{9cm}{0.4pt}
\noindent\rule{9cm}{0.4pt}
\textbf{Time  Update}
\begin{enumerate} 
 \setlength{\itemsep}{-2ex}  
 \setlength{\parskip}{0ex} 
 \setlength{\parsep}{0ex}
\item Initialization of the filter at $k=0$:\hfil\break
\begin{equation}
\left\{
                \begin{array} {l} \label{eq:equation60}
                \hat{x}_{0}^+= E(x_{0})\\
                P_{0}^+= E[(x_{0}-\hat{x}_{0}^+)(x_{0}-\hat{x}_{0}^+)^T]\\
                \end{array}
\right.
\end{equation}
where represents the expected value and the $+$ in superscript implies an \textit{a posteriori} estimate.
\\
\item Computation of the partial-derivation matrices for $k=1,2,\dots$ :
\\
\begin{equation} 
\left\{
                \begin{array} {l} \label{eq:equation61}
                F_{k-1} = \frac {\partial f_{k-1}}{\partial x}|_{\hat{x}_{k-1}^+}\\
                L_{k-1} = \frac {\partial f_{k-1}}{\partial w}|_{\hat{x}_{k-1}^+}\\
                \end{array}
\right.
\end{equation}
\\
\item Computation of the time update of the state estimate and estimation-error covariance ($k=1,2, \dots$):
\\
\begin{equation} 
\left\{
                \begin{array} {l} \label{eq:equation62}
                P_{k-1}^- =F_{k-1}P_{k-1}F_{k-1}^T+L_{k-1}Q_{k-1}L_{k-1}^T\\
                \hat{x}_{k}^- = f(\hat{x}_{k-1},u_{k-1},0)\\
                \end{array}
\right.
\end{equation}
\end{enumerate}

\noindent\rule{9cm}{0.4pt}
\textbf{Measurement Update}
\begin{enumerate}
\item Computation of the partial-derivation matrices for $k=1,2,\dots$ :
\\
\begin{equation}
\left\{
                \begin{array} {l}  \label{eq:equation63}
                H_{k-1} = \frac {\partial h_{k}}{\partial x}|_{\hat{x}_{k}^-}\\
                M_{k-1} = \frac {\partial h_{k-1}}{\partial v}|_{\hat{x}_{k}^-}\\
                \end{array}
\right.
\end{equation}
\\
\item Computation of the measurement update of the state estimate
and estimation-error covariance ($k=1,2, \dots$):
\\
\begin{equation} 
\left\{
                \begin{array} {l}  \label{eq:equation64}
                K_{k} =P_{k}^-H_{k}^T(H_{k}P_{k}^-H_{k}^T+M_{k}R_{k}M_{k}^T)^{-1}\\
                \hat{x}_{k}^+ = \hat{x}_{k}^- +K_{k}[y_{k}-h_{k}(\hat{x}_{k}^-,0)]\\
                P_{k}^+=(I-K_{k}H_{k})P_{k}^-
                \end{array}
\right.
\end{equation}
\end{enumerate}

\noindent\rule{9cm}{0.4pt}
\noindent\rule{9cm}{0.4pt}

\section*{Appendix B}
\section*{CKF ALGORITHM}

\noindent\rule{9cm}{0.4pt}
\noindent\rule{9cm}{0.4pt}
\textbf{Time Update}
\begin{enumerate}
\item Assume that the posterior density function is known at time $k$ .\\
 \begin{equation}  \label{eq:equation65}
p(x_{k-1}|D_{k-1})=N(\hat{x}_{k-1|k-1},P_{k-1|k-1})
\end{equation}
  Factorize
  \begin{equation}  \label{eq:equation66}
P_{k-1|k-1}=S_{k-1|k-1}S_{k-1|k-1}^T.
\end{equation}
  \item Compute the cubature points ($i=1,2,\dots,m$):
    \begin{equation}  \label{eq:equation67}
X_{i,k-1|k-1}=S_{k-1|k-1}\xi_{i}+\hat{x}_{k-1|k-1}
\end{equation}
where $m=2n_{x}$.
\item Compute the propagated cubature points ($i=1,2,\dots,m$)
    \begin{equation} \label{eq:equation68}
X_{i,k|k-1}^*=f(X_{i,k-1|k-1},u_{k-1}).
\end{equation}
\item Estimate the predicted state 
\begin{equation} \label{eq:equation69}
\hat{x}_{k|k-1}=\frac{1}{m}\sum_{i=1}^{m} X_{i,k|k-1}^*.
\end{equation} 
\item Estimate the predicted error covariance
    \begin{equation}  \label{eq:equation70}
P_{k|k-1}=\frac{1}{m}\sum_{i=1}^{m} X_{i,k|k-1}^*X_{i,k|k-1}^{*T}-\hat{x}_{k|k-1}\hat{x}_{k|k-1}^T+Q_{k-1}.
\end{equation}

\end{enumerate}
\noindent\rule{9cm}{0.4pt}
\textbf{Measurement Update}
\begin{enumerate}
\item Factorize
    \begin{equation}  \label{eq:equation71}
P_{k|k-1}=S_{k|k-1}S_{k|k-1}^T.
\end{equation}
\item Compute the cubature points ($i=1,2,\dots,m$)
    \begin{equation}  \label{eq:equation72}
X_{i,k|k-1}=S_{k|k-1}\xi_{i}+\hat{x}_{k|k-1}
\end{equation}
\item Compute the propagated cubature points ($i=1,2,\dots,m$)
    \begin{equation} \label{eq:equation73}
Z_{i,k|k-1}=h(X_{i,k|k-1},u_{k}).
\end{equation}
\item Estimate the predicted measurement
\begin{equation}  \label{eq:equation74}
\hat{z}_{k|k-1}=\frac{1}{m}\sum_{i=1}^{m} Z_{i,k|k-1}.
\end{equation}
\item Estimate the innovation covariance matrix 
    \begin{equation}  \label{eq:equation75}
P_{zz,k|k-1}=\frac{1}{m}\sum_{i=1}^{m} Z_{i,k|k-1}Z_{i,k|k-1}^{T}-\hat{z}_{k|k-1}\hat{z}_{k|k-1}^T+R_{k}.
\end{equation}
\item Estimate the cross-covariance matrix 
    \begin{equation}  \label{eq:equation76}
P_{xz,k|k-1}=\frac{1}{m}\sum_{i=1}^{m} X_{i,k|k-1}Z_{i,k|k-1}^{T}-\hat{x}_{k|k-1}\hat{z}_{k|k-1}^T.
\end{equation}
\item Compute the Kalman gain
    \begin{equation}  \label{eq:equation77}
W_{k}=P_{xz,k|k-1}P_{zz,k|k-1}^{-1}.
\end{equation}
\item Compute the updated state 
    \begin{equation}   \label{eq:equation78}
\hat{x}_{k|k}=\hat{x}_{k|k-1}+W_{k}(z_{k}-\hat{z}_{k|k-1}).
\end{equation}
\item Compute the corresponding error covariance 
    \begin{equation}  \label{eq:equation79}
P_{k|k}=P_{k|k-1}-W_{k}P_{zz,k|k-1}W_{k}^T.
\end{equation}
\end{enumerate}

\noindent\rule{9cm}{0.4pt}
\noindent\rule{9cm}{0.4pt}

\section*{Appendix C}
\section*{SCKF ALGORITHM}

\noindent\rule{9cm}{0.4pt}
\noindent\rule{9cm}{0.4pt}
\textbf{Time Update}
\begin{enumerate}
\item Skip the factorization \eqref{eq:equation66} because the square-root of the
error covariance, $S_{k-1|k-1}$ is available. Compute from
equation \eqref{eq:equation67} to equation \eqref{eq:equation69}.
\item Estimate the square-root factor of the predicted error
covariance
    \begin{equation} \label{eq:equation80}
S_{k|k-1}=\textbf{Trial}([\mathcal{X}_{k|k-1} S_{Q,k-1} ])
\end{equation} 
Where $S_{Q,k-1}$ is a square-root factor of $Q_{k-1}$ such
that $Q_{k-1}=S_{Q,k-1}S_{Q,k-1}^T$ and the weighted, centered
(prior mean is subtracted off) matrix
    \begin{equation}  \label{eq:equation81}
  \begin{array}{l}  
\mathcal{X}_{k|k-1}^*=\frac{1}{\sqrt{m}}[X_{1,k|k-1}^*-\hat{x}_{k|k-1} \  \   X_{2,k|k-1}^*-\hat{x}_{k|k-1} ...\\ \  \ \  \ \ \ \ \ \ \ \ \ \ \ \ \ \ \  \  \ \ \ \  \  \  \   \  \ X_{m,k|k-1}^*-\hat{x}_{k|k-1}]
  \end{array}
  \end{equation}

\end{enumerate}

\noindent\rule{9cm}{0.4pt}
\textbf{Measurement Update}
\begin{enumerate}
\item Skip the factorization \eqref{eq:equation71} because the square-root of
the error covariance, $S_{k|k-1}$, is available. Compute from
equation \eqref{eq:equation72} to equation \eqref{eq:equation74}.
\item Compute the square-root of the innovation covariance
matrix 
    \begin{equation}  \label{eq:equation82}
S_{zz,k|k-1}=\textbf{Trial}([\mathcal{Z}_{k|k-1} S_{R,k} ])
\end{equation}
where $S_{R,k}$ is a square-root factor of $R_{k}$ such that $R_{k}=S_{R,k}S_{R,k}^T$ and the weighted, centered matrix
    \begin{equation}   \label{eq:equation83}
  \begin{array}{l}
\mathcal{Z}_{k|k-1}=\frac{1}{\sqrt{m}}[Z_{1,k|k-1}-\hat{z}_{k|k-1} \  \   Z_{2,k|k-1}-\hat{z}_{k|k-1} ...\\ \  \ \  \ \ \ \ \ \ \ \ \ \ \ \ \ \ \  \  \ \ \ \  \  \  \   \  \ Z_{m,k|k-1}-\hat{z}_{k|k-1}]
  \end{array}
  \end{equation}
\item Compute the cross-covariance matrix
    \begin{equation}   \label{eq:equation84}
P_{xz,k|k-1}=\mathcal{X}_{k|k-1}\mathcal{Z}_{k|k-1}^T
\end{equation}
where the weighted, centered matrix
    \begin{equation} 
  \begin{array}{l}  \label{eq:equation85}
\mathcal{X}_{k|k-1}=\frac{1}{\sqrt{m}}[X_{1,k|k-1}-\hat{x}_{k|k-1} \  \   X_{2,k|k-1}-\hat{x}_{k|k-1} ...\\ \  \ \  \ \ \ \ \ \ \ \ \ \ \ \ \ \ \  \  \ \ \ \  \  \  \   \  \ X_{m,k|k-1}-\hat{x}_{k|k-1}]
  \end{array}
  \end{equation}
\item Compute the Kalman gain 
    \begin{equation}  \label{eq:equation86}
W_{k}=(P_{xz,k|k-1}/S_{zz,k|k-1}^T)/S_{zz,k|k-1}.
\end{equation}
\item Compute the updated state $\hat{x}_{k|k}$ as in \eqref{eq:equation78}.
\item Compute the square-root factor of the corresponding error
covariance 
    \begin{equation}  \label{eq:equation87}
S_{k|k}=\textbf{Trial}([\mathcal{X}_{k|k-1}-W_{k}\mathcal{Z}_{k|k-1}  W_{k}S_{R,k}]).
\end{equation}
\end{enumerate}

\smallskip

% that's all folks
\end{document}